\newtheorem{theorem}{\bf Theorem}
\newtheorem{proposition}{\bf Proposition}
\newtheorem{definition}{\bf Definition}
\newlength{\aligntop}
\newlength{\alignbot}
\renewenvironment{align}{%
  \vspace{\aligntop}
  \start@align\@ne\st@rredfalse\m@ne
}{%
  \math@cr \black@\totwidth@
  \egroup
  \ifingather@
    \restorealignstate@
    \egroup
    \nonumber
    \ifnum0=`{\fi\iffalse}\fi
  \else
    $$%
  \fi
  \ignorespacesafterend%
  \vspace{\alignbot}\par\noindent
}
\def\footnoterule{\kern-3\p@
  \hrule  \kern 4\p@} 
\begin{document}
\title{\huge Prospect Theory for Enhanced Smart Grid Resilience Using Distributed Energy Storage
 \vspace{-0.55cm}}

\author{\IEEEauthorblockN{Georges El Rahi$^1$, Anibal Sanjab$^1$, Walid Saad$^1$, Narayan B. Mandayam$^2$, and H. Vincent Poor$^3$  \thanks{This research was supported in part by the National Science Foundation under Grants ECCS-1549894, ECCS-1549900, CNS-1446621, ACI-1541105, ACI-1541069, and  ECCS-1549881.} }\IEEEauthorblockA{\small 
$^1$ Electrical and Computer Engineering Department, Virginia Tech, Blacksburg, VA, USA, Emails:  \{gelrahi, anibals, walids\}@vt.edu \\
$^2$ Electrical and Computer Engineering Department, Rutgers University, North Brunswick, NJ, USA, Email: \url{narayan@winlab.rutger.edu}\\
$^3$ Electrical Engineering Department, Princeton University, Princeton, NJ, USA, Email: \url{poor@princeton.edu} 
\vspace{-0.60cm}
 }%
 }
\maketitle 

\begin{abstract}
The proliferation of distributed generation and storage units is leading to the development of local, small-scale distribution grids, known as microgrids (MGs). In this paper, the problem of optimizing the energy trading decisions of MG operators (MGOs) is studied using game theory. In the formulated game, each MGO chooses the amount of energy that must be sold immediately or stored for future emergencies, given the prospective market prices which are influenced by other MGOs' decisions. The problem is modeled using a Bayesian game to account for the incomplete information that MGOs have about each others' levels of surplus. 
The proposed game explicitly accounts for each MGO's subjective decision when faced with the uncertainty of its opponents' energy surplus. In particular, the so-called framing effect, from the framework of prospect theory (PT), is used to account for each MGO's valuation of its gains and losses with respect to an individual utility reference point. The reference point is typically different for each individual and originates from its past experiences and future aspirations. A closed-form  expression for the Bayesian Nash equilibrium is derived for the standard game formulation. Under PT, a best response algorithm is proposed to find the equilibrium. Simulation results show that, depending on their individual reference points, MGOs can tend to store more or less energy under PT compared to classical game theory. In addition, the impact of the reference point is found to be more prominent as the emergency price set by the power company increases.

\end{abstract}


\vspace{-0.2cm}
\section {Introduction}
\vspace{-0.10cm}

The emerging concept of microgrids (MGs) will play a major role in the modernization of the power grid. Microgrids are small-scale local power grids which are, typically, composed of renewable generation units, storage devices, and energy consumers \cite{path}. MGs are managed by various MG operators (MGOs) and can operate in either connected or islanded modes, and are expected to bring forth innovative solutions for the smart grid by enhancing power management and providing energy reserves via storage. 

\indent Indeed, the storage capability of MGs can be used to assist in the energy management of the smart grid as investigated by a number of recent works \cite{scutari, scutari2, hamed}. 
%
%
However, more recently, there has been considerable interest in using the storage abilities of MGs  to enhance the resilience of the smart grid against emergency events such as natural disasters or security breaches.  In this regard, various academic, industrial, and federal reports \cite{intro1,intro2,intro3} have proposed leveraging the MGs' storage capacity to mitigate the effect of loss of generation during emergencies by meeting the smart grid's most critical loads. Indeed, distributed storage and generation units, the integral constituents of MGs, have played an essential role in preserving the operation of hospitals and police stations, as well as fire fighting and rescue services centers in many recent emergency situations in the United States \cite{intro3}. For instance, this has been the case during natural disasters such as hurricanes Katrina and Rita, and the wildfires which interrupted the transmission of electricity to parts of Utah in 1995 and 2003, as well as in the 2003 North American Northeast blackout \cite{intro3}. In addition to the various reports in \cite{intro1,intro2,intro3} that encourage the use of MG storage to enhance grid resilience, other works such as \cite{power1} and \cite{power2} have also investigated the issues related to power quality that might arise when a critical load is supplied by MG energy sources. However,  there is a lack of works that analyze the willingness and ability of MGOs to participate in covering the power grid's critical loads.   \\
\indent To this end, in order to leverage the distributed storage units across MGs, the power companies must offer significant financial incentives for the MGOs to keep a portion of their energy surplus in storage for potential emergency use. The MGOs are hence faced with the choice of selling their excess at the current market price, or storing it and potentially selling it at the significantly higher emergency price, in the future. Moreover, given the fact that the energy bought in case of emergency is limited, competition will arise between the different MGOs who seek  to take advantage of the incentives offered by the power company for emergency energy.    \\
\indent In this regard, game theory \cite{walid} can be used to model the interdependency
between MGOs and predict the outcomes of their competitive behavior. In fact, game-theoretic analysis has been a popular tool for understanding the interactions between  storage owners in smart grid energy management \cite{scutari, scutari2, hamed}. However, these works do not investigate the aforementioned scenarios in which storage is used for improving resilience. Moreover, these works typically rely on games with complete information, which are not practical for smart grid scenarios.\\
\indent Another key drawback of existing game-theoretic analysis is the assumption that all players are rational and  thus seek to maximize their expected utilities in a similar objective manner. In a real-life application however, as observed by the experimental studies in \cite{nobel} and \cite{CPT}, the behavior of individuals can deviate considerably from the rational principles of conventional game theory.  In this regard, the framework of \emph{prospect theory} (PT) \cite{nobel} can be used to model the non-rational behavior of MGOs in the presence of uncertainty such as renewable energy sources \cite{proc},  and its impact on the ability of MGs to meet the power grid's critical load.\\
\indent The main contribution of this paper is to propose a new framework for analyzing the storage strategy of MGOs in order to enhance smart grid resilience. In this regard, we formulate a noncooperative Bayesian game between multiple MGOs  to account for the incomplete information of each MGO regarding the excess of energy of its opponents. In this game, each MGO must choose a portion of its MG's energy excess to store so as to maximize a utility function that captures the tradeoff between selling at the current market price and potentially selling in the future at a significantly higher emergency energy price. In contrast to conventional game theory, we develop a prospect-theoretic framework that models the behavior of MGOs when faced with the uncertainty of their opponents' stored energy, which stems from the presence of intermittent renewable energy sources. In particular, we account for each MGO's valuation of its gains and losses with respect to its own individual utility evaluation perspective, as captured via the PT framing effect \cite{nobel} by a utility reference point. This reference point represents a utility that an individual MGO anticipates and it originates from previous experiences and future aspirations
of profits, which can differ in between MGOs \cite{CPT}. \\
\indent For this proposed game, we derive the  closed-form expression for the Bayesian Nash equilibrium (BNE) for the classical game-theoretic scenario and interpret this equilibrium under different conditions. For the PT case, we propose a best response algorithm  that allows the MGOs to reach a BNE in a decentralized fashion. 
Simulation results highlight the difference in MGO behavior between the fully rational case of classical game theory (CGT) and the prospect-theoretic scenario. Indeed, for certain reference points, MGOs choose to store more energy under PT compared to CGT, while the case is reversed for other reference points where MGOs noticeably reduce their MGs' stored energy. In addition, the impact of the reference point is found to be more prominent as the emergency price increases. The power company  must therefore quantify the subjective behavior of the MGOs before choosing the optimal emergency energy price, in order to meet the critical load at minimal cost. \\
\indent {The rest of this paper is organized as follows. Section II presents  the system model and provides the Bayesian game formulation.  In Section III, we present the game solution under classical game theoretic analysis, while we present in Section IV  the game solution under prospect theoretic analysis. In Section V, we present and interpret our simulation results, and finally conclusions are drawn in Section VI.

\vspace{-0.08cm}
\section{System model and Bayesian Game Formulation}

Consider a large-scale smart grid managed by a power company that integrates a set $\mathcal{N}$ of $N$  microgrids, each of which is managed by an MG operator. Microgrids are small-scale distribution grids which typically include renewable generation units, storage devices, and energy consumers. Each MG operator manages all energy trades conducted by its own MG. 
Each MG $n \in \mathcal{N}$, managed by its MGO $n$, includes a storage unit with capacity $Q_{n,\textrm{max}}$ which can be used to store the excess of energy produced. Given the intermittent nature of renewable energy sources, each MG's energy surplus $Q_n \in \left[ 0, Q_{n,\textrm{max}} \right] $ is unknown beforehand and will vary over time. 
A positive $Q_n$ indicates that an MG has extra energy while $Q_n= 0$  indicates that no surplus is available. Given an amount of energy surplus, $Q_n$, an MGO $n$ has the option of selling this stored energy to the grid at the corresponding retail price, $\rho$, or saving it for later use in case of emergency, for improved resilience. In this regard, each MGO will choose a portion $\alpha_n \in \left[0 , 1 \right]$ of its MG's $Q_n$ to store and will consequently sell the rest. In case of emergency or blackout, the power company will
purchase the stored energy  to cover a certain required
critical load $L_c$, until normal power supply is restored.

In order to increase the resilience of the power grid  against emergency events,  the power company will encourage the MGOs to store part 
of their MGs' excess by offering a price $\rho_{c}$  per unit of stored energy purchased in case of emergency.  Typically, $\rho_{c}$ must be significantly larger than $\rho$ to incentivize the MGOs to store the excess. If the total stored energy exceeds the needed $L_c$, the power company will no longer purchase the entire energy stored by each MG.

Let $\boldsymbol{\alpha}$ and $\boldsymbol{Q}$  be the vectors that represent, respectively, the storage strategy and the available energy surpluses of all the MGOs in the set  $\mathcal{N}$. 
In this respect, when $\boldsymbol{\alpha}^\intercal\boldsymbol{Q} > L_c$, the power company will purchase, from each MG $n$, an amount of energy $D_n$ given by:

\vspace{-0.2cm}
\begin{align}
D_n =\left(\alpha_n Q_n - \frac { \boldsymbol{\alpha}^\intercal\boldsymbol{Q} -L_c }  {\textit{N}} \right)^{{+}}, 
\end{align}
where
$(q)^+ = \textrm{max}(0, q)$. $ \boldsymbol{\alpha}^\intercal\boldsymbol{Q} -L_c  $ is the amount by which the total stored energy exceeds the required $L_c$. Let $\theta$ be the expected probability of an emergency event occurring. Then, each MGO $n$ will choose its optimal storage strategy $\alpha_n$ to optimize the following utility function:
 \begin{align} \label{eq:uti}
\small{
    U_n(\boldsymbol{\alpha},\boldsymbol{Q}) =
    \begin{cases}
     \rho\left( Q_n - \alpha_n Q_n \right) + \theta \rho_{c} \alpha_n Q_n,  & \text{if}\   \boldsymbol{\alpha}^\intercal\boldsymbol{Q} \leq L_c , \\
     \rho\left( Q_n - \alpha_n Q_n\right) + \theta \rho_{c} D_n,  & \text{otherwise.}
    \end{cases}
}
 \end{align}
\normalsize
\indent  Note that, when $\theta \rho_{c} < \rho$, the MGOs will have no incentive to store their MGs' excess and, hence, they will sell all the available surplus at the current market price. Thus, hereinafter, we restrict our analysis to the case $\theta \rho_{c} > \rho$. 
As seen from (\ref{eq:uti}), the driving factor in determining an MGO's optimal strategy is the total energy stored by its opponents. In fact, as $\boldsymbol{\alpha}^\intercal\boldsymbol{Q} -L_c$ increases, so will the amount of stored energy which will not be bought in case of emergency. Indeed, the MGO could have instead sold that energy at the current market price and made a profit. Given this trade-off between selling at the current market price and storing the excess for a potentially higher profit in case of emergency, each MGO aims at maximizing its utility function by choosing the optimal storage strategy $\alpha_n$, while also accounting for the actions of its opposing MGs. 

 Each MGO is typically fully aware of the presence of all $N$  MGs in the power grid and knows the size of their storage devices. In addition, each MGO knows the exact amount of energy excess available to its own MG. However, an MGO cannot determine the energy excess of other MGs. In fact, obtaining such information is not possible especially given the intermittent renewable energy sources and the time-varying nature of energy consumption. Each MGO thus assumes the excess of energy $Q_m$ of other MGs to be a random variable that follows a certain probability distribution function $f_n(Q_m)$ over $[0, Q_{m,\textrm{max}}]$ where $m \in \mathcal{N} \setminus \{n\}$. We refer to $Q_n$ as the \emph{type} of MGO $n$ and, to $f_n(Q_m)$, as MGO $n$'s \emph{belief} of another MGO $m$'s type. In fact, when MGO $n$ chooses a certain storage strategy $\alpha_n$, it is uncertain of the profit it will gain. This uncertainty stems from its incomplete information regarding the type of its opponents, originating from the intermittent renewable energy  and the time-varying nature of energy consumption,  as well as from randomness of an emergency event.

Given the competition over the financial incentives offered by the power company for emergency energy, the MGOs' actions and utility are highly interdependent thus motivating a game-theoretic approach \cite{walid}. In addition, given the incomplete information of the opponents' excess of energy that directly affects the MGOs' utility, each MGO will maximize its expected utility given its own beliefs $f_n(Q_m)$. MGO $n$'s expected utility, $E_n(\boldsymbol{\alpha},Q_n)$, will therefore be given by

\vspace{-0.4cm}
\begin{align}
E_n(\boldsymbol{\alpha},Q_n) = \mathbb{E}_{\boldsymbol{Q}_{-n}} \left[ U_n(\boldsymbol{\alpha}, \boldsymbol{Q})\right],
\end{align}
where  $\boldsymbol{Q}_{-n}$  is the vector that represents the energy excess of all MGs in the set  $\mathcal{N} \setminus \{n\}$.
The strategic interactions between the various MGOs under incomplete information can be modeled using Bayesian game models \cite{walid}.
\vspace{-0.2cm}

\subsection{Bayesian game formulation}
We formulate a static noncooperative Bayesian game \cite{walid} between the different MGOs in the set $\mathcal{N}$. In this game, each MGO seeks to maximize its expected utility given its beliefs of its opponents' energy excess by choosing its optimal storage strategy. Since the decisions on the portion of energy to store
are coupled, as captured by (2), we adopt a game-theoretic approach.
Formally, we define a strategic game 
$\Xi=\{\mathcal{N}, \{\mathcal{A}_n\}_{n\in\mathcal{N}}, \{\mathcal{T}_n \}_{n\in\mathcal{N}} , \{\mathcal{F}_n\}_{n\in\mathcal{N}} ,\{U_n\}_{n\in\mathcal{N}}\}$ where $\mathcal{N}$ is the set of all MGOs, $\mathcal{A}_n$ is the action space which represents the possible storage strategies of each player $n$, $\mathcal{T}_n$ is the set of types of MGOs that represent the  possible energy surplus for each their MGs,  $\mathcal{F}_n$ is the set of beliefs of player $n$ represented by the probability distributions of each of its opponents' types, and $U_n$ is the utility function of player $n$ defined in (\ref{eq:uti}).
In order to find the solution of the proposed game, we first define the two key concepts of  \emph{best response strategy} and \emph{Bayesian Nash equilibrium}.  
\begin{definition}
The set of \emph{best response strategies} of an MGO $n \in \mathcal{N}$  to the strategy profile $\boldsymbol{\alpha}_{-n}$, $r(\boldsymbol{\alpha}_{-n})$, is  defined as
\begin{flalign}\label{eq:brbr}
\small
\nonumber r_n(\boldsymbol{\alpha}_{-n})\!=\!\{\alpha_n^{*} \in \mathcal{A}_n| \mathbb{E}_{\boldsymbol{Q}_{-n}} \left[ U_n(\alpha^*_n,\boldsymbol{\alpha}_{-n}, \boldsymbol{Q}) \right] \geq \\
 \mathbb{E}_{\boldsymbol{Q}_{-n}} \left[ U_n(\alpha_n,\boldsymbol{\alpha}_{-n},  \boldsymbol{Q})  \right], \forall \alpha_n \in \mathcal{A}_n\},
\end{flalign}
 where  $\boldsymbol{\alpha}_{-n}$  is the vector that represents the storage strategy of all MGOs in the set  $\mathcal{N} \setminus \{n\}$.
\end{definition}
\normalsize
In other words, when the strategies of the opponents are fixed to $\boldsymbol{\alpha}_{-n}$, any best response strategy would maximize player $n$'s expected utility, given its beliefs $\mathcal{F}_n $ of its opponents' types. In our analysis, we assume that an MGO's belief $f_n(Q_m)$ over its opponent's  energy surplus  follows a uniform distribution over the domain $\left[0  , Q_{m,\textrm{max}}\right]$.
We next define the concept of a pure strategy Bayesian Nash equilibrium.

\begin{definition}
A strategy profile $\boldsymbol{\alpha}^*$ is said to be a \emph{pure strategy Bayesian  Nash equilibrium} if every MGO's strategy is a best response to the other
MGOs' strategies, i.e.
\begin{align}
\alpha_n^{*} \in  r_n(\boldsymbol{\alpha}^*_{-n}) \, \forall n \in \mathcal{N}.
\end{align}
\end{definition}\vspace{-0.4cm}
In the proposed game, at the BNE, no MGO $n$, can increase its expected utility by unilaterally deviating from its storage strategy $\alpha^{*}_n$.

 In what follows,  we will derive closed-form expressions of the BNEs for the case in which two MGs are located in the proximity of the critical load. In fact, power supply to the critical load from distant MGs might not be feasible due to transmission barriers  and significant power losses. As such, given these limitations and the scale of a given microgrid, the analysis for two MGs will be quite representative.



\section{Two-player Game solution under Classical Game Theory analysis}

For the case in which two MGs ($N=2$) are capable of supplying the critical load, the expected utility of MGO $1$ given its belief of MGO $2$'s type can be written as
\vspace{-0.0cm}
 \begin{align}
E_1(\boldsymbol{\alpha},Q_1) = \int_{0}^{Q_{2,\textrm{max}}}   U_1(\boldsymbol{\alpha},\boldsymbol{Q}) f_1(Q_2) dQ_2,
 \end{align}\\
where  $\boldsymbol{\alpha} = [ \alpha_1 \,\,  \alpha_2 ]$ and $\boldsymbol{Q} =  [ Q_1 \, \, Q_2 ]$. For the two-MG case, we have
\vspace{-0.0cm}\small
 \begin{align}\label{eq:U1}
    U_1(\boldsymbol{\alpha},\boldsymbol{Q}) =
    \begin{cases}
     \rho Q_1\left( 1 - \alpha_1  \right) + \theta \rho_{c} \alpha_1 Q_1  & \text{if}\    \alpha_2 \leq \frac{ L_c - \alpha_1 Q_1}  { Q_2}, \\
     \rho Q_1 \left( 1 - \alpha_1\right) + \theta \rho_{c} D_1  & \text{otherwise.}
    \end{cases}
 \end{align}
\normalsize
Next, we assume that neither of the MGs owns a large enough storage device to fully supply the critical load on its own. Under this assumption, $D_1$ will be given by

\small
\begin{align} 
D_1 =\alpha_1 Q_1 - \frac{1}{2} \left(\alpha_1 Q_1 +\alpha_2 Q_2 - L_c \right). 
\end{align}
\normalsize
In order to find the solution of the proposed game, we first derive the best response strategy of each player which we then use to compute the different BNEs. 
\normalsize
\vspace{-0.2cm}
\subsection{Derivation of the best response} 
The best response strategy of each  MGO is characterized next. In fact, we present the following propositions that analyze MGO $1$'s best response for different values of $\alpha_2$. 

\begin{proposition} 
The best response of MGO $1$, for $\alpha_2 \in \left[ 0 , \frac {L_c -  Q_1} {Q_{2,\textrm{max}}} \right]$, is given by  $r_{1}(\alpha_2) = 1$. MGO $1$  thus maximizes its expected utility by storing its MG's entire energy excess.

\end{proposition}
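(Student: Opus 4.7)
The plan is to show that the hypothesis on $\alpha_2$ is exactly strong enough to guarantee that, regardless of what $Q_2$ turns out to be, the inequality $\alpha_1 Q_1 + \alpha_2 Q_2 \leq L_c$ will hold for every feasible $\alpha_1 \in [0,1]$. Once this is established, the utility reduces to the first (linear) branch of (\ref{eq:U1}), and maximization becomes a trivial one-variable calculation.

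Concretely, the first step is to verify the uniform constraint. Under the hypothesis $\alpha_2 \leq (L_c - Q_1)/Q_{2,\textrm{max}}$, for any $Q_2 \in [0, Q_{2,\textrm{max}}]$ and any $\alpha_1 \in [0,1]$, I bound
\begin{align}
\alpha_1 Q_1 + \alpha_2 Q_2 \;\leq\; Q_1 + \alpha_2 Q_{2,\textrm{max}} \;\leq\; Q_1 + (L_c - Q_1) \;=\; L_c.
\end{align}
Hence the first case of (\ref{eq:U1}) is active for every realization of $Q_2$ in the support of MGO 1's belief $f_1(Q_2)$.

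The second step is the optimization itself. On this branch, $U_1(\boldsymbol{\alpha},\boldsymbol{Q})$ does not depend on $Q_2$ at all, so the expected utility simplifies to
\begin{align}
E_1(\boldsymbol{\alpha}, Q_1) \;=\; \rho Q_1 (1 - \alpha_1) + \theta \rho_c \, \alpha_1 Q_1 \;=\; \rho Q_1 + \alpha_1 Q_1 (\theta \rho_c - \rho).
\end{align}
This is affine in $\alpha_1$ with slope $Q_1(\theta \rho_c - \rho)$, which is strictly positive by the standing assumption $\theta \rho_c > \rho$ (the trivial case $Q_1 = 0$ makes every $\alpha_1$ a best response, so $\alpha_1 = 1$ remains valid). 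Therefore the maximizer over $\mathcal{A}_1 = [0,1]$ is $\alpha_1^{*} = 1$, which is precisely $r_1(\alpha_2) = 1$.

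There is no real obstacle here; the proposition is essentially a consistency check that the author's piecewise utility has been partitioned correctly along the boundary $\alpha_2 = (L_c - Q_1)/Q_{2,\textrm{max}}$. The only subtlety worth flagging is the uniformity of the constraint over $Q_2$, which is what forces the tight bound $\alpha_2 Q_{2,\textrm{max}} \leq L_c - Q_1$ rather than some pointwise (and belief-dependent) inequality. I would write the proof in exactly two short displays corresponding to the two steps above.
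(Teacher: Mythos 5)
Your proof is correct and follows essentially the same route as the paper's: you verify that $\alpha_2 Q_{2,\textrm{max}} + Q_1 \leq L_c$ forces the first branch of the utility for every realization of $Q_2$ and every $\alpha_1$, and then observe that the resulting expected utility is increasing in $\alpha_1$ because $\theta\rho_c > \rho$, so the maximizer is $\alpha_1 = 1$. Your explicit chain of inequalities and the remark about the degenerate case $Q_1 = 0$ are slightly more careful than the paper's one-line justification, but the argument is the same.
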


\vspace{-0.25cm}

\begin{proof}
For $\alpha_2 \leq \frac {L_c -  Q_1} {Q_{2,\textrm{max}}}$, the total stored energy is below the critical load for all types of MGO $2$ and all strategies of MGO $1$ since $\alpha_2 Q_{2,\textrm{max}} + Q_1 \leq L_c$. Thus, MGO $1$'s best response is to store its entire energy excess which is fully sold in case of emergency. In fact, here, \normalsize $E_{1}(\boldsymbol{\alpha},Q_1) = U_1(\boldsymbol{\alpha},\boldsymbol{Q}) =  \rho\left( Q_1 - \alpha_1 Q_1 \right) + \theta \rho_{c} \alpha_1 Q_1$ since $U_1(\boldsymbol{\alpha},\boldsymbol{Q})$ is independent of $Q_2$ for this case, as seen in  (\ref{eq:U1}). $E_1(\boldsymbol{\alpha},Q_1)$ is clearly an increasing function, given that $\rho_c\theta > \rho$, which is maximized at its upper boundary ($\alpha_1 = 1$). Thus $ r_{1}(\alpha_2) =1$  for $\alpha_2 \in \left[ 0 , \frac {L_c -  Q_1} {Q_{2,\textrm{max}}} \right]$.
\end{proof}
\vspace{-0.4cm}
\begin{proposition} 
The best response of MGO $1$, for $\alpha_2 \in \left[\frac {L_c -  Q_1} {Q_{2,\textrm{max}}}, 1 \right]$, is given by \\
\begin{align}\label{eq:brbrbrbr} \footnotesize  
   r_{1}(\alpha_2)  =
   \begin{cases}
    \frac {L_c \rho_c \theta + ( \rho_c \theta - 2\rho)\alpha_2 Q_{2,\textrm{max}} } { Q_1 \rho_c \theta}, & \text{if}\   \, \left[\frac {2\rho}{\rho_c\theta}-1\right] \alpha_2 > \frac{L_c -  Q_1} {Q_{2,\textrm{max}}},  \\
    1, & \text{if}  \, \left[\frac {2\rho}{\rho_c\theta}-1\right] \alpha_2 \leq \frac{L_c -  Q_1} {Q_{2,\textrm{max}}}.\\
    \end{cases}
\end{align}
\end{proposition}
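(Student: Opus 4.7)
The plan is to extend the approach of Proposition 1 by letting the crossover realization of $Q_2$ lie strictly inside the support of MGO $1$'s belief. Define $Q_2^{\star} := (L_c - \alpha_1 Q_1)/\alpha_2$, so that $\alpha_1 Q_1 + \alpha_2 Q_2$ is below (resp.\ above) $L_c$ for $Q_2$ below (resp.\ above) $Q_2^{\star}$. For the range of $\alpha_2$ in the statement, $Q_2^{\star}\in[0,Q_{2,\textrm{max}}]$ for every feasible $\alpha_1\in[0,1]$, and the expected utility splits cleanly as
\begin{equation*}
E_1(\boldsymbol{\alpha},Q_1) = \int_{0}^{Q_2^{\star}} U_1^{(a)}\, f_1(Q_2)\, dQ_2 + \int_{Q_2^{\star}}^{Q_{2,\textrm{max}}} U_1^{(b)}\, f_1(Q_2)\, dQ_2,
\end{equation*}
where $U_1^{(a)}$ and $U_1^{(b)}$ denote the two branches of (\ref{eq:U1}).

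Substituting the uniform density $f_1(Q_2)=1/Q_{2,\textrm{max}}$ and the explicit $D_1 = \tfrac{1}{2}(\alpha_1 Q_1 - \alpha_2 Q_2 + L_c)$ turns $E_1$ into a closed-form quadratic in $\alpha_1$. Differentiating via Leibniz's rule is streamlined by the observation that $U_1$ is continuous across $Q_2 = Q_2^{\star}$ (both branches agree there, since $D_1 = \alpha_1 Q_1$ when the total storage equals $L_c$), so the boundary contributions cancel and only the integrated partials $\partial U_1^{(a,b)}/\partial\alpha_1$ survive. The resulting first-order condition is an affine equation in $\alpha_1$ whose unique root, after rearrangement, reproduces the interior expression in (\ref{eq:brbrbrbr}).

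A short second-derivative calculation gives $\partial^{2} E_1/\partial \alpha_1^{2} = -\theta \rho_c Q_1^{2}/(2\alpha_2 Q_{2,\textrm{max}}) < 0$, so $E_1$ is strictly concave in $\alpha_1$ on $[0,1]$. Hence when the interior root lies in $[0,1]$ it is the unique constrained maximizer; otherwise the interior root exceeds $1$ and concavity places the constrained maximum at the boundary $\alpha_1 = 1$. Equating the interior formula to $1$ yields $(L_c - Q_1)/Q_{2,\textrm{max}} = \alpha_2[2\rho/(\rho_c \theta) - 1]$, precisely the threshold separating the two sub-cases in (\ref{eq:brbrbrbr}).

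The main obstacle is the case analysis driven by the sign of $2\rho/(\rho_c\theta) - 1$. When $\rho_c\theta \geq 2\rho$ this factor is non-positive, the threshold condition is automatic (since $L_c > Q_{1,\textrm{max}} \geq Q_1$ by the standing assumption that no single MG can cover $L_c$), and the second sub-case applies throughout the interval. Only when $\rho_c\theta < 2\rho$ does the comparison in (\ref{eq:brbrbrbr}) become non-trivial, and in that regime one must also verify that the interior expression is non-negative; this follows from combining $\theta\rho_c > \rho$ with $L_c > Q_{2,\textrm{max}}$, which together force the numerator of the interior formula to stay positive for every $\alpha_2 \in [0,1]$, completing the proof.
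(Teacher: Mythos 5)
Your route is essentially the paper's (Appendix A): split the expectation over $Q_2$ at the crossover type, use continuity of $U_1$ across that type so that only the integrated partials survive differentiation, obtain strict concavity from $\partial^2 E_1/\partial\alpha_1^2=-\theta\rho_c Q_1^2/(2\alpha_2 Q_{2,\textrm{max}})<0$, and clamp the unconstrained root at $\alpha_1=1$; your threshold algebra and the sign discussion of $2\rho/(\rho_c\theta)-1$ are also correct.

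There is, however, one genuine gap. You assert that for $\alpha_2\in\left[\frac{L_c-Q_1}{Q_{2,\textrm{max}}},1\right]$ the crossover $Q_2^{\star}=(L_c-\alpha_1Q_1)/\alpha_2$ lies in $[0,Q_{2,\textrm{max}}]$ for \emph{every} $\alpha_1\in[0,1]$. That is false whenever $\alpha_2 Q_{2,\textrm{max}}<L_c$, which is possible throughout the stated range since its left endpoint only guarantees $\alpha_2 Q_{2,\textrm{max}}\geq L_c-Q_1$: for $\alpha_1<(L_c-\alpha_2 Q_{2,\textrm{max}})/Q_1$ one has $Q_2^{\star}>Q_{2,\textrm{max}}$, the second integral in your decomposition runs backwards, and the ``quadratic in $\alpha_1$'' you differentiate is not the expected utility there. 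On that lower sub-interval $E_1$ is the linear increasing function $\rho Q_1(1-\alpha_1)+\theta\rho_c\alpha_1 Q_1$, so your decomposition and the concavity claim are only valid on $\alpha_1\in\left[(L_c-\alpha_2 Q_{2,\textrm{max}})/Q_1,\,1\right]$. The conclusion survives, but you must supply the missing step: either note that the linear piece meets the concave piece at the junction with matching value and matching derivative $Q_1(\theta\rho_c-\rho)>0$ (so $E_1$ is still concave on all of $[0,1]$ and increasing up to the junction, placing the maximizer in the upper sub-interval), or argue directly, as the paper does, that $\alpha_{1,r}-\frac{L_c-\alpha_2 Q_{2,\textrm{max}}}{Q_1}=\frac{2(\rho_c\theta-\rho)\alpha_2 Q_{2,\textrm{max}}}{Q_1\rho_c\theta}>0$ whenever $\rho_c\theta>\rho$, so the junction point can never be the constrained maximizer.
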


\normalsize
\vspace{-0.3cm}
\begin{proof}
The proof of this proposition is in Appendix A.
\end{proof}

Given the previous propositions, an MGO's  best response strategy  is thus summarized in the following theorem.

\begin{theorem}
 The best response strategy of MGO $1$, $r_{1}(\alpha_2)$, is given by
\begin{align}\label{eq:brbrbr} \small
 r_{1}(\alpha_2) =
   \begin{cases}
   1, & \text{if}\   \alpha_2 \leq \frac {L_c -  Q_1} {Q_{2,\textrm{max}}}, \\
    \alpha_{1,r}, & \text{if}\  \alpha_2 > \frac {L_c -  Q_1} {Q_{2,\textrm{max}}} \, \textrm{and}  \, \left[\frac {2\rho}{\rho_c\theta}-1\right] \alpha_2 > \frac {L_c -  Q_1} {Q_{2,\textrm{max}}},  \\
    1, & \text{if}\ \alpha_2 > \frac {L_c -  Q_1} {Q_{2,\textrm{max}}} \, \textrm{and}  \, \left[\frac {2\rho}{\rho_c\theta}-1\right] \alpha_2 \leq \frac{L_c -  Q_1} {Q_{2,\textrm{max}}}.\\
    \end{cases}
\end{align}
\normalsize
\noindent MGO $2$'s best response strategy $r_2(\alpha_1)$ is derived similarly and is the same as (\ref{eq:brbrbr}) but with indices $1$ and $2$ interchanged.
\end{theorem}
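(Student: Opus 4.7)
The plan is to obtain Theorem~1 essentially by assembling Propositions~1 and~2, since between them they cover every possible value of $\alpha_2 \in [0,1]$. First I would partition the domain of $\alpha_2$ into the regime where the opponent's storage is small enough that the combined stored energy cannot exceed $L_c$ regardless of MGO~1's choice, and the complementary regime where exceeding $L_c$ becomes possible. Concretely, I split on whether $\alpha_2 \leq \tfrac{L_c - Q_1}{Q_{2,\textrm{max}}}$ or $\alpha_2 > \tfrac{L_c - Q_1}{Q_{2,\textrm{max}}}$.

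For the first regime, Proposition~1 directly yields $r_1(\alpha_2) = 1$, which matches the first case of the theorem statement. For the second regime, Proposition~2 gives the remaining two cases: when $\bigl[\tfrac{2\rho}{\rho_c\theta}-1\bigr]\alpha_2 > \tfrac{L_c - Q_1}{Q_{2,\textrm{max}}}$, the best response is the interior point $\alpha_{1,r} = \frac{L_c\rho_c\theta + (\rho_c\theta - 2\rho)\alpha_2 Q_{2,\textrm{max}}}{Q_1 \rho_c\theta}$, while in the complementary sub-case the best response is again the boundary value $1$. These pieces line up exactly with the three branches written in~(\ref{eq:brbrbr}).

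What remains is to confirm that the partition is exhaustive and that the pieces agree on their common boundary, which is a short sanity check: at $\alpha_2 = \tfrac{L_c - Q_1}{Q_{2,\textrm{max}}}$ both Propositions~1 and~2 return $r_1(\alpha_2) = 1$, so the two regimes glue continuously. The statement for MGO~2's best response is then obtained by symmetry of the game in the two players (the utility function in (\ref{eq:U1}) and the expected-utility integrand are invariant under swapping indices $1$ and $2$), so that an identical argument with the labels exchanged delivers $r_2(\alpha_1)$.

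There is really no main obstacle here, as all the analytical work has already been done inside Propositions~1 and~2; the role of Theorem~1 is just to consolidate those results into a single piecewise best-response map. The only care needed is bookkeeping on the inequality constraints defining each region so that the three branches in (\ref{eq:brbrbr}) are stated with the correct strict/non-strict inequalities and cover $[0,1]$ without overlap.
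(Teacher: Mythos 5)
Your proposal is correct and follows essentially the same route as the paper, which proves Theorem~1 simply by combining Propositions~1 and~2 over the two complementary regimes of $\alpha_2$ and appealing to symmetry for MGO~$2$. The extra checks you mention (exhaustiveness of the partition and agreement of the two branches at $\alpha_2 = \tfrac{L_c - Q_1}{Q_{2,\textrm{max}}}$) are sound and only make the consolidation more explicit.
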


\begin{proof}
The proof follows  from Propositions 1 and 2.
\end{proof}

\subsection {Derivation and interpretation of the equilibria}

Given the MGOs' best response function in (\ref{eq:brbrbr}), we will compute all possible BNEs for this game. We will then derive and interpret the conditions needed for each BNE to exist.

\begin{theorem}
The proposed  MGO game admits four possible  Bayesian Nash equilibria for different conditions that relate the MG parameters, $Q_n$ and $Q_{n,\textrm{max}}$, with power grid parameters $\rho, \rho_c, \theta$, and $L_c$. The strategy profiles ($\alpha_1^*,\alpha_2^*$), that constitute the four BNEs, are the following:\\ \normalsize
1) First BNE: (1,1).\\
2) Second BNE:   $\left(1, \dfrac {L_c \rho_c \theta + ( \rho_c \theta - 2p) Q_{1,\textrm{max}}} { Q_2 \rho_c \theta} \right)$.\\ \normalsize
3) Third BNE:  $\left(  \dfrac {L_c  \rho_c  \theta + ( \rho_c  \theta - 2p) Q_{2,\textrm{max}}} { Q_1 p_c \theta},\normalsize 1 \right)$. \\ \normalsize 
4) Fourth BNE: $\left(\alpha^{*}_{1,4} , \alpha^{*}_{2,4}\right)$ is the strategy profile that constitute the fourth BNE, where \\\\ 
$\alpha^{*}_{1,4} = \dfrac {- L  \rho_c  \theta (Q_2 \rho_c \theta - 2 Q_{2,\textrm{max}}\rho  + Q_{2,\textrm{max}}\rho_c \theta )} {Q_{1,\textrm{max}} Q_{2,\textrm{max}} \left( 4\rho^2 +  \rho_c^2\theta^2 - 4  \rho \rho_c \theta \right) - Q_1 Q_2 \rho_c^2  \theta^2 }$\normalsize,\\ \vspace{0.15cm}
$\alpha^{*}_{2,4} = \dfrac {- L  \rho_c  \theta (Q_1 \rho_c \theta - 2Q_{\textrm{max},1}\rho  + Q_{1,\textrm{max}}\rho_c \theta )} {Q_{1,\textrm{max}} Q_{2,\textrm{max}} \left( 4\rho^2 +  \rho_c^2\theta^2 - 4  \rho \rho_c \theta \right) - Q_1 Q_2 \rho_c^2  \theta^2 }$. \\
\normalsize
\end{theorem}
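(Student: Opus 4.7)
The plan is to find all fixed points of the joint best-response map $(r_1, r_2)$ given by Theorem 1, and then identify the parameter regimes under which each fixed point actually lies in the appropriate branch. The key observation is that, although each $r_n(\alpha_{-n})$ is defined piecewise with three branches, two of those branches collapse to the single corner value $\alpha_n = 1$. Thus at any BNE each MGO plays either the corner $\alpha_n^* = 1$ or the interior best response $\alpha_n^* = \frac{L_c \rho_c \theta + (\rho_c\theta - 2\rho)\,\alpha_{-n}^*\, Q_{-n,\textrm{max}}}{Q_n \rho_c \theta}$. Combining the two possibilities across two players yields four candidate strategy profiles, matching the four listed BNEs.

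First I would treat the trivial case in which both MGOs sit at the corner. Setting $\alpha_1^* = \alpha_2^* = 1$ satisfies Definition 2 whenever either corner branch of $r_n$ in (\ref{eq:brbrbr}) is active at $\alpha_{-n} = 1$; this yields BNE 1. Next, assuming MGO $1$ is at the corner and MGO $2$ is on its interior branch, I would substitute $\alpha_1^* = 1$ into the linear expression of MGO $2$'s best response to obtain $\alpha_2^* = \frac{L_c\rho_c\theta + (\rho_c\theta - 2\rho)Q_{1,\textrm{max}}}{Q_2\rho_c\theta}$, which is BNE 2. BNE 3 follows by symmetry after swapping indices.

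For the fourth BNE, both MGOs are on their interior branch, so $\alpha_1^*$ and $\alpha_2^*$ jointly satisfy the $2\times 2$ linear system
\begin{align}
Q_1 \rho_c \theta\, \alpha_1^* &= L_c \rho_c \theta + (\rho_c\theta - 2\rho)\, Q_{2,\textrm{max}}\, \alpha_2^*,\\
Q_2 \rho_c \theta\, \alpha_2^* &= L_c \rho_c \theta + (\rho_c\theta - 2\rho)\, Q_{1,\textrm{max}}\, \alpha_1^*.
\end{align}
Substituting the first equation into the second and solving for $\alpha_2^*$ produces a rational expression whose denominator is $Q_1 Q_2 \rho_c^2\theta^2 - Q_{1,\textrm{max}} Q_{2,\textrm{max}}(\rho_c\theta - 2\rho)^2$; expanding the square gives exactly the denominator listed in the theorem (up to a global sign). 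The numerator simplifies to $L_c\rho_c\theta\bigl(Q_1\rho_c\theta + Q_{1,\textrm{max}}(\rho_c\theta - 2\rho)\bigr)$, matching $\alpha_{2,4}^*$; the expression for $\alpha_{1,4}^*$ follows analogously.

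Finally, for each candidate I would verify that the branch conditions in Theorem 1 are consistent with the equilibrium values, which in turn produces the parameter conditions alluded to in the theorem statement: at BNE 1 both MGOs must satisfy the condition under which $r_n(1)=1$; at BNEs 2 and 3 exactly one MGO's interior branch must be active while the other's corner branch is active; and at BNE 4 both interior branches must be active and the computed $(\alpha_{1,4}^*, \alpha_{2,4}^*)$ must lie in $[0,1]^2$. I expect the principal obstacle to be the bookkeeping in this consistency check and the sign handling in the denominator of the fourth BNE, since the sign of $Q_1 Q_2 \rho_c^2\theta^2 - Q_{1,\textrm{max}} Q_{2,\textrm{max}}(\rho_c\theta - 2\rho)^2$ controls feasibility; the remaining derivations are routine linear algebra and direct substitution into Theorem 1.
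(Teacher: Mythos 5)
Your proposal is correct and follows essentially the same route as the paper, whose proof simply states that the BNEs are obtained by solving $\alpha_1^* = r_1(\alpha_2^*)$ and $\alpha_2^* = r_2(\alpha_1^*)$ over the possible combinations of best-response branches; your enumeration of corner/interior combinations, the Cramer-style solution of the $2\times 2$ linear system for the fourth BNE (with the sign flip in the denominator correctly accounted for), and the final branch-consistency check are exactly the details the paper leaves implicit.
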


\vspace{-0.3cm}
\begin{proof}
The strategy profiles of the BNEs are derived by solving the set of best-response equations, $\alpha_1^* = r_1(\alpha_2^*)$ and $\alpha_2^*=r_2(\alpha_1^*)$,
for the different possible combinations of the best response strategies.
\end{proof}

\vspace{-0.1cm}
\noindent The conditions under which each BNE is defined are further summarized and interpreted next.\\

\vspace{-0.1cm}

\subsubsection {First BNE} the strategy profile (1,1) constitutes a BNE of the proposed game if any of the following four conditions is satisfied:\\
\vspace{-0.4cm}

 a) \, \normalsize $L_c \geq Q_{2,\textrm{max}} + Q_1$  and $L_c \geq Q_{1,\textrm{max}} + Q_2$. Here, each MGO is aware that the total stored energy is below the critical load, regardless of the type and strategy of its opponent.

 b)\, \normalsize $L_c \geq Q_{2,\textrm{max}} + Q_1$  and  $ \frac {2\rho}{\rho_c\theta}-1   \leq \frac{L_c -  Q_2} {Q_{1,\textrm{max}}} <1 $.  \normalsize Here, MGO $1$ knows that the total stored energy is always below the critical load regardless of the type and strategy of its opponent. On the other hand, MGO $2$ is aware that part of its MG's stored energy might not be sold in case of emergency. However, $\rho_c$ is large enough compared to $\rho$ to satisfy the condition under which MGO $2$ stores its MG's entire excess.

 c)\, \normalsize $\frac {2\rho}{\rho_c\theta}-1   \leq \frac{L_c -  Q_1} {Q_{2,\textrm{max}}} <1 $ and $L_c \geq Q_{1,\textrm{max}} + Q_2$. \normalsize The analysis of this condition is the same as condition b) with the order of the players reversed.  

d)\,\normalsize $\frac {2\rho}{\rho_c\theta}-1   \leq \frac{L_c -  Q_1} {Q_{2,\textrm{max}}} <1 $ and $ \frac {2\rho}{\rho_c\theta}-1   \leq \frac {L_c -  Q_2} {Q_{1,\textrm{max}}} <1 $. \normalsize In this case, both MGOs are aware that part of their stored energy might not be sold. However, $\rho_c$ is large enough compared to $\rho$ to satisfy the conditions for which both MGOs store their MGs' entire excess.
\normalsize

\vspace{-0.05cm}

\subsubsection {Second BNE}  the strategy profile \normalsize $\left(1, \frac {L_c \rho_c \theta + ( \rho_c \theta - 2p) Q_{1,\textrm{max}}} { Q_2 \rho_c \theta} \right)$ \normalsize constitutes a BNE of the proposed game if any of the following two conditions are satisfied: \vspace{0.1cm}

a)  \vspace{0.1cm} \normalsize $L_c \tiny \geq  \frac {L_c \rho_c \theta + ( \rho_c \theta - 2p) Q_{1,\textrm{max}}} { Q_2 \rho_c \theta} Q_{2,\textrm{max}} + Q_1$ \normalsize  and \\
\normalsize  $\frac {2\rho}{\rho_c\theta}-1  > \frac {L_c -  Q_2} {Q_{1,\textrm{max}}} $. \normalsize
In this case, MGO $1$ knows that given MGO $2$'s storage strategy, the total stored energy is always below the critical load. Meanwhile, MGO $2$ is aware that, given MGO $1$'s strategy, the total stored energy might exceed the critical load and part of its stored energy might not be sold in case of emergency. MGO $2$ will not store the entire excess given that $\rho_c$ is not large enough compared to $\rho$. \vspace{0.1cm}

b)  $ \left[\frac {2\rho}{\rho_c\theta}-1\right] \frac {L_c \rho_c \theta + ( \rho_c \theta - 2\rho) Q_{1,\textrm{max}}} { Q_2 \rho_c \theta} \leq \frac{L_c -  Q_1} {Q_{2,\textrm{max}}}$\normalsize,

\noindent  $\frac {L_c -  Q_1} {Q_{2,\textrm{max}}}    <\frac {L_c \rho_c \theta + ( \rho_c \theta - 2\rho) Q_{1,\textrm{max}}} { Q_2 \rho_c \theta} $   and   $ \frac {2\rho}{\rho_c\theta}-1  > \frac{L_c -  Q_2} {Q_{1,\textrm{max}}}$\normalsize. 
Here, both MGOs know that given their opponent's strategy, part of their MG's stored energy might not be sold. The emergency price  $\rho_c$ is large enough compared to $\rho$ to satisfy the condition for which MGO $1$ stores the entire excess, however, it is not large enough for MG $2$ to fully store its MG's entire excess.

\normalsize
\vspace{5 mm}
\vspace{-0.2cm}
\subsubsection {Third BNE} The interpretation of the third BNE is similar to that of the second but with index 1 swapped with 2.
\\
\subsubsection{Fourth BNE} The strategy profile
$\left(\alpha^{*}_{1,4} , \alpha^{*}_{2,4}\right)$, defined in Theorem 2,  constitutes a BNE which is obtained by solving the set of equations $\alpha_1^{*} =  \alpha_{1,r}$ \footnotesize and \normalsize   $\alpha_2^*=  \alpha_{2,r}$, in the case where the following condition is satisfied: \\

 a)  $\alpha^{*}_{2,4} \left[\frac {2\rho}{\rho_c\theta}-1\right] >  \frac{L_c -  Q_1} {Q_{2,\textrm{max}}}$ 
  and $\alpha^{*}_{1,4}  \left[\frac {2\rho}{\rho_c\theta}-1\right] >\frac {L_c -  Q_2} {Q_{1,\textrm{max}}} $. \\
\vspace{-0.1cm}

\normalsize \noindent Under this condition, both MGOs know that given their opponent's strategy, part of their MG's stored energy might not be sold. The emergency price $\rho_c$  is not large enough to satisfy the conditions under which either MGO stores the entire excess. 

Our previous analysis assumes that all MG operators are fully rational and their behavior can thus be modeled using classical game-theoretic analysis. However, this assumption might not hold true in a real smart grid, given that the operators of the MGs might have different subjective valuations of the payoffs gained from selling their energy surplus.  Next, we will use the framework of prospect theory \cite{nobel} to model the behavior of MGOs when faced with such uncertainty and subjectivity of profits, stemming from the presence of renewable energy and the uncertainty it imposes on the volume of energy surplus that other MGOs generate. 

\vspace{-0.1cm}
\section{ prospect  theoretic analysis}

 \vspace{-0.05cm}
In a classical noncooperative game, a player evaluates an objective expected utility. However, in practice, individuals tend to subjectively perceive their utility when faced with  uncertainty \cite{nobel}. In our model, an MGO's uncertainty originates from the presence of renewable energy and the uncertainty it imposes on the volume of energy surplus that the opposing MGOs generate. In fact, an MGO is uncertain of the portion of its MG's stored energy that will be sold in case of emergency, which is directly related to the energy surplus available to its opponents. Since MGOs are humans, they will perceive the possible profits of energy trading, in terms of gains and losses.

This motivates the application of PT to account for the MGO's subjectivity while choosing the optimal energy portion to store. PT is a widely used tool for understanding  human behavior when faced with uncertainty of alternatives. In our analysis, we will inspect the effect of the key notion of  utility framing from prospect theory. Utility framing states that a utility is considered a gain if it is larger than the reference point, while it is perceived as a loss if it is smaller than that reference point. We define $R_n$ as the reference point of a given MGO $n$. The choice of $R_n$ can be different between MGOs as it reflects personal expectations of profit from selling the energy surplus. In this regard, a certain profit, $r$,  originating from a particular energy trade, will be perceived differently by an MGO used to reaping larger profits as opposed to an MGO that usually generates lower profits. In fact, an MGO $n$ with historically high profits would have a high reference point, $R_n>r$, and will hence consider $r$ to be a loss, whereas, an MGO $m$ with relatively low historical profits  would have a lower reference point, $R_m<r$ and would hence consider $r$ to be a gain. Consequently, to model this subjective perception of losses and gains we need to redefine the utility function of the MGOs using PT framing \cite{CPT}:
 \vspace{-0.5cm}

 \begin{equation}\label{eq:rule}
\footnotesize
 V\left(U_n\left(\boldsymbol{\alpha},\boldsymbol{Q}\right) \right)=\begin{cases}
\left( U_n(\boldsymbol{\alpha},\boldsymbol{Q}) -R_n\right)^{\beta^+} &\textrm{ if } U_n(\boldsymbol{\alpha},\boldsymbol{Q}) > R_n, \\
-\lambda_n  \left( R_n -  U_n(\boldsymbol{\alpha},\boldsymbol{Q})\right)^{\beta^-} &\textrm { if } U_n(\boldsymbol{\alpha},\boldsymbol{Q}) < R_n,
\end{cases}\\
 \end{equation}
\normalsize
where $0<{\beta^-}\leq 1 , 0<{\beta^+} \leq 1$ and $\lambda \geq 1$.

\normalsize

 $V(\cdot)$ is the framing value function that is concave in gains and convex in losses with a larger slope for losses than for gains \cite{CPT}. In fact, PT studies show that the aggravation that an individual feels for losing a sum of money is greater than the satisfaction associated with gaining the same amount\cite{nobel}, which explains the introduction of the loss multiplier $\lambda_n$. In addition, the framing principle states that an individual's sensitivity  to marginal change in its utility diminishes as we move further away from the reference point, which explains the introduction of the gain and loss exponents $\beta^+$ and $\beta^-$.
%

%


It is important to note that, as an MGO chooses to store a larger portion $\alpha$ of its MG's energy,  its potential payoffs will now span a larger range of  values. In other words, as an MGO stores more energy, it will now have the possibility to make higher expected profits by selling more in case of emergency. On the other hand, by storing more energy, the MGO risks making less profit whenever its opponent has also stored a significant part of its own energy. These probable payoffs are related to the type of the opponent. In fact, the MGO would get a maximum profit for the case in which the opponent's type is small, i.e. the opponent did not have a significant energy surplus. For the case in which the opponent's type is large, a significant part of an MG's stored energy will not be sold in case of emergency, resulting in lower possible payoffs for its MGO, compared to smaller values of $\alpha$. This concept is key in our  PT analysis, given that payoffs are evaluated through comparison to the reference point. 
Similarly to our analysis for the CGT case, we will first derive the best response strategy of the MGOs.

\begin{proposition}
The best response of MGO $1$ under PT, for $\alpha_2 \in \left[ 0 , \frac {L_c -  Q_1} {Q_{2,\textrm{max}}} \right]$, is to store its entire energy excess, similarly to the classical game theory analysis.
\end{proposition}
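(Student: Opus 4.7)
The plan is to observe that in this regime the uncertainty over MGO~2's type is entirely eliminated, so the PT framing acts on a deterministic, strictly monotone quantity, and the best response coincides with the classical one for essentially the same reason as in Proposition~1.

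First, I would show that the condition $\alpha_2 \leq (L_c - Q_1)/Q_{2,\textrm{max}}$ guarantees $\alpha_1 Q_1 + \alpha_2 Q_2 \leq L_c$ for every $\alpha_1 \in [0,1]$ and every realization $Q_2 \in [0, Q_{2,\textrm{max}}]$, because $\alpha_1 Q_1 + \alpha_2 Q_2 \leq Q_1 + \alpha_2 Q_{2,\textrm{max}} \leq L_c$. Consequently, the first branch of \eqref{eq:U1} applies almost surely with respect to the belief $f_1(Q_2)$, and the realized utility reduces to $U_1(\boldsymbol{\alpha},\boldsymbol{Q}) = \rho Q_1 + \alpha_1 Q_1 (\theta \rho_c - \rho)$, which is independent of $Q_2$.

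Next, I would apply the framing transformation in \eqref{eq:rule}. Since $U_1$ no longer depends on $Q_2$, the expected PT-utility $\mathbb{E}_{Q_2}[V(U_1(\boldsymbol{\alpha},\boldsymbol{Q}))]$ collapses to the deterministic value $V(U_1(\boldsymbol{\alpha},\boldsymbol{Q}))$. The key observation is that $V(\cdot)$ is a strictly increasing function of its argument: on $\{U_1 > R_1\}$ it is $(U_1 - R_1)^{\beta^+}$, which is increasing since $\beta^+ > 0$; on $\{U_1 < R_1\}$ it is $-\lambda_1 (R_1 - U_1)^{\beta^-}$, which is also increasing in $U_1$ since $\beta^- > 0$ and $\lambda_1 \geq 1$; and the two pieces meet continuously at the reference point, where $V = 0$. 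Hence $V \circ U_1$ is monotonically increasing in $U_1$.

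Finally, because $\theta \rho_c > \rho$ by the standing assumption, $U_1$ is strictly increasing in $\alpha_1$ with slope $Q_1(\theta \rho_c - \rho) > 0$. Composing with the increasing framing function yields a strictly increasing expected PT-utility in $\alpha_1$ over $[0,1]$, so the maximizer is the upper boundary $\alpha_1 = 1$, giving $r_1(\alpha_2) = 1$ on the stated interval. The only delicate point, and the one I would be careful about, is verifying that $V$ is globally increasing across the reference point $R_1$ despite its piecewise definition; this is immediate from the sign conventions in \eqref{eq:rule} and the fact that both branches vanish at $U_1 = R_1$, so no genuine obstacle arises in this case.
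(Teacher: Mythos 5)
Your proof is correct and follows essentially the same route as the paper's: observe that for $\alpha_2 \leq (L_c - Q_1)/Q_{2,\textrm{max}}$ the utility $U_1$ is deterministic and strictly increasing in $\alpha_1$ (since $\theta\rho_c > \rho$), then compose with the increasing framing function $V(\cdot)$ to conclude the maximizer is $\alpha_1 = 1$. Your additional care in checking that $V$ is increasing across the reference point (both branches vanishing at $U_1 = R_1$) is a detail the paper leaves implicit but does not change the argument.
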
 

\begin{proof}
As seen from Proposition 1,  for $\alpha_2 \in \left[ 0 , \frac {L_c -  Q_1} {Q_{2,\textrm{max}}} \right]$, $U_1(\boldsymbol{\alpha},\boldsymbol{Q}$) is an increasing function over its domain. Given that the framing function $V(\cdot)$ is an increasing function as well, MGO $1$'s expected utility, $E_{1,\textrm{PT}}(\boldsymbol{\alpha},Q_1) =V\left(U_1\left(\boldsymbol{\alpha}, \boldsymbol{Q}\right)\right)$, is thus maximized at its upper boundary of $\alpha_1 = 1$.
\end{proof}

\vspace{-0.2cm}
We next derive the expected utility of MGO $1$ under PT for  $\alpha_2 \in \left[ \frac {L_c -  Q_1}  {Q_{2,\textrm{max}}} , 1 \right]$. MG $1$'s expected utility for $\alpha_2 \in \left[ \frac {L_c -  Q_1}  {Q_{2,\textrm{max}}} , 1 \right]$ takes different values for $\alpha_1 \in \left[0 , \frac {L_c -  \alpha_2 Q_{2,\textrm{max}} } {Q_1}\right]$ and $ \alpha_1 \in \left[\frac {L_c -  \alpha_2 Q_{2,\textrm{max}}}{Q_1},1\right]$:

\begin{proposition}
 For $\alpha_1 \in \left[0 , \frac {L_c -  \alpha_2 Q_{2,\textrm{max}} }  {Q_1}\right] $ and $\alpha_2 \in \left[ \frac {L_c -  Q_1}  {Q_{2,\textrm{max}}} , 1 \right]$, MGO 1's expected utility under PT, $E_{\textrm{PT},1,2a}$,  is given by

\vspace{-0.6cm}
 \begin{equation}\label{eq:cpt2a}\small
  E_{\textrm{PT},1,2a}(\boldsymbol{\alpha},Q_1) =\begin{cases}
 -\lambda_1  \left( R_1-U_{1,2a} \right)^{\beta_1^-}     &\textrm{ if } \alpha_1 \leq B \normalsize , \\
  \left(U_{1,2a} - R_1 \right) ^{\beta_1^+}&\textrm { if }  \alpha_1 > B\normalsize,
\end{cases}
 \end{equation}
where \small$U_{1,2a} = \rho\left( Q_1 - \alpha_1 Q_1 \right) - \theta \rho_{c} \alpha_1 Q_1$, and $B=\frac {R_1- \rho Q_1}  {Q_1\left(\rho_c \theta - \rho \right)}$. 
\end{proposition}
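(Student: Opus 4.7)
My plan is to exploit the fact that on the sub-region of Case 2 isolated in the proposition, the randomness in $Q_2$ disappears entirely from $U_1$. Once that is established, the expectation drops out and the PT framing rule in (\ref{eq:rule}) reduces to a piecewise formula in a single deterministic quantity, so the proposition follows from a routine monotonicity argument. This is essentially the same simplification that drove Proposition 1 / Proposition 3.

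First I would verify the determinism. The hypothesis $\alpha_1 \leq \frac{L_c - \alpha_2 Q_{2,\textrm{max}}}{Q_1}$ rearranges to $\alpha_1 Q_1 + \alpha_2 Q_{2,\textrm{max}} \leq L_c$; combined with $Q_2 \leq Q_{2,\textrm{max}}$ this yields $\alpha_1 Q_1 + \alpha_2 Q_2 \leq L_c$ for every realization of $Q_2 \in [0, Q_{2,\textrm{max}}]$. Hence for every $Q_2$ the upper branch of (\ref{eq:U1}) is active and $U_1(\boldsymbol{\alpha},\boldsymbol{Q}) = \rho(Q_1-\alpha_1 Q_1) + \theta\rho_c \alpha_1 Q_1$, which does not depend on $Q_2$. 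Call this constant value $U_{1,2a}$.

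With $U_1$ constant in $Q_2$, the expected framed utility is simply $V(U_{1,2a})$. I would then set $U_{1,2a}=R_1$ and solve for the crossover: $\rho Q_1 + (\theta\rho_c-\rho)\alpha_1 Q_1 = R_1$ gives $\alpha_1 = \frac{R_1-\rho Q_1}{Q_1(\rho_c\theta-\rho)} = B$. Because the standing assumption $\theta\rho_c > \rho$ makes $U_{1,2a}$ strictly increasing in $\alpha_1$, we have $U_{1,2a} > R_1 \iff \alpha_1 > B$ and $U_{1,2a} < R_1 \iff \alpha_1 < B$. Substituting the two branches of (\ref{eq:rule}) with this deterministic $U_{1,2a}$ then reproduces the piecewise expression in (\ref{eq:cpt2a}) verbatim.

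The main obstacle is essentially bookkeeping rather than conceptual. The two genuine points of care are (i) confirming that the stated range of $\alpha_1$ really covers the sub-region in which $U_1$ is $Q_2$-independent (so the ``otherwise'' branch of (\ref{eq:U1}) never fires inside the expectation), and (ii) invoking $\theta\rho_c > \rho$ to justify a single threshold $B$ via monotonicity. I would also reconcile what appears to be a sign typo in the displayed form of $U_{1,2a}$: the $\theta\rho_c$ term must enter with a $+$, since otherwise solving $U_{1,2a}=R_1$ would not yield the stated $B$; with the correct sign the derivation is internally consistent.
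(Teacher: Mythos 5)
Your proposal is correct and follows essentially the same route as the paper's (very terse) proof: the hypothesis on $\alpha_1$ makes $U_1$ independent of $Q_2$, so the expectation collapses to $V(U_{1,2a})$ and the threshold $B$ is just the solution of $U_{1,2a}=R_1$, with monotonicity from $\theta\rho_c>\rho$ giving the two branches. Your observation that the displayed $U_{1,2a}$ should carry a $+\theta\rho_c\alpha_1 Q_1$ term (consistent with (\ref{eq:U1}) and with the stated $B$) is right; the minus sign in the proposition is a typo.
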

\begin{proof}
In Proposition 4, Equation (\ref{eq:cpt2a}) follows from the fact that for {$\alpha_1 \leq B$}, \normalsize the original utility, $U_{1,2a}$, is below the reference point $R_1$ and is thus perceived as a loss. On the other hand, it is considered as a gain for $\alpha_1 > B$.
\end{proof}

\begin{proposition}
 For $\alpha_1 \in \left[ \frac {L_c -  \alpha_2 Q_{2,\textrm{max}} }  {Q_1}, 1\right] $ and $\alpha_2 \in \left[ \frac {L_c -  Q_1}  {Q_{2,\textrm{max}}} , 1 \right]$, player 1's expected utility under PT is given by
\begin{align}
E_{\textrm{PT},1,2b}(\boldsymbol{\alpha},Q_1) = I_1 + I_2,
\end{align}
\normalsize
\vspace{-0.6cm}

where 
 \begin{equation}\label{eq:CPI1}
\footnotesize
I_1 =\begin{cases} \vspace{0.1cm}
 -\dfrac{ \lambda_1 (L_c - \alpha_1Q_1)}{\alpha_2Q_{\textrm{max},2}}  \left[ R_1- U_{I,1}\right]^{\beta_1^-}     &\textrm{ if } \alpha_1 \leq B, \\
  \dfrac{ L_c - \alpha_1Q_1 }{\alpha_2Q_{\textrm{max},2}} \left[ U_{I,1} - R_1  \right] ^{\beta_1^+}&\textrm { if }  \alpha_1 > B,
\end{cases}
\normalsize
 \end{equation}

\vspace{-0.2cm}
\begin{equation} 
U_{I,1} = \rho\left( Q_1 - \alpha_1 Q_1 \right) + \theta \rho_{c} \alpha_1 Q_1,
\end{equation}

\vspace{-0.5cm}
\begin{equation}\label{eq:CPI2} 
\footnotesize \vspace{-0.15cm}
 I_2=\begin{cases}
M_l \left[ \left(  R_1-U_{\textrm{max},2} \right)^{\beta_1^- +1} - 
\left( R_1-U_{A,2} \right)^{\beta_1^- +1} \right]  &\textrm{ if }  C_1 , 
\\ 
M_g \left[ \left( U_{r,2} - R_1 \right)^{\beta_1^+ +1} -  \vspace{-0.15cm} \vspace{-0.15cm}
\left(U_{A,2} - R_1  \right)^{\beta_1^+ +1} \right] +\\ \vspace{0.0cm}
M_l \left[ \left(  R_1 - U_{\textrm{max},2}  \right)^{\beta_1^- +1} - 
\left(  R_1- U_{r,2}\right)^{\beta_1^- +1} \right]  &\textrm{ if } {C}_2\footnotesize, \vspace{-0.15cm}\vspace{-0.15cm}
\\ \\ 
M_g \left[ \left( U_{\textrm{max},2} - R_1  \right)^{\beta_1^+ +1} - 
\left(U_{A,2} - R_1  \right)^{\beta_1^+ +1} \right]  &\textrm{ if } {C}_3 \footnotesize,
\end{cases}
\end{equation}\small
 $M_g =  \dfrac {-2}{\left(\beta_1^+ +1\right) \rho_c\theta \alpha_2}$, $M_l = \dfrac {-2\lambda_1}{\left(\beta_1^- +1\right) \rho_c\theta \alpha_2}$, $U_{\textrm{max},2} = \rho\left( Q_1 - \alpha_1 Q_1\right) + \frac {1}{2} \theta \rho_{c} \left(\alpha_1 Q_1 + L_c  -  Q_{2,\textrm{max}}\right)$,
$U_{\textrm{A},2} = \rho\left( Q_1 - \alpha_1 Q_1\right) + \frac {1}{2} \theta \rho_{c} \left(\alpha_1 Q_1 + L_c  -  A \right)$, $A =  \frac {L_c - \alpha_1 Q_1}  {\alpha_2 }$,
and $U_{r,2} = \rho\left( Q_1 - \alpha_1 Q_1\right) + \frac {1}{2} \theta \rho_{c} \left(\alpha_1 Q_1 + L_c  -  Q_{2,r}\right)$. \normalsize $Q_{2,r}$ is given in (\ref{eq:Q_r_2}). \\
Condition $C_1$, $C_2$, and $C_3$ are given by \small
\begin{align}\label{eq:condition_1}
C_1: \alpha_1 \leq B, \hspace{5.755cm}
\end{align}
\vspace{-0.7cm}
\small
\begin{multline}\label{eq:condition_2} 
C_2: \alpha_1 > B \,\,\, \textrm{and} \,\,\, \\
Q_1 \left(  \theta\rho_c - 2\rho \right) \alpha_1 \leq  \theta\rho_c\alpha_2Q_{\textrm{max},2} -  L_c\rho_c\theta - 2\rho Q_1 + 2R_1,
 \end{multline}
\vspace{-0.7cm}
\noindent
 \begin{multline}\label{eq:condition_3} 
C_3: \alpha_1 > B \,\,\, \textrm{and} \,\,\, \\
Q_1 \left(  \theta\rho_c - 2\rho \right)   \alpha_1 >  \theta\rho_c\alpha_2Q_{\textrm{max},2} -  L_c\rho_c\theta - 2\rho Q_1 + 2R_1.
 \end{multline}
\normalsize
MGO 2's expected utility function is derived in a similar manner as MGO 1's with indices 1 and 2 reversed.

\end{proposition}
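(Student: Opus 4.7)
The plan is to compute $E_{\textrm{PT},1,2b}(\boldsymbol{\alpha},Q_1) = \int_0^{Q_{2,\textrm{max}}} V(U_1(\boldsymbol{\alpha},\boldsymbol{Q}))\, f_1(Q_2)\,dQ_2$ by partitioning the range of integration at $Q_2 = A = (L_c - \alpha_1 Q_1)/\alpha_2$, where the underlying utility (\ref{eq:U1}) switches form. The hypothesis $\alpha_1 \in [(L_c - \alpha_2 Q_{2,\textrm{max}})/Q_1,\,1]$ together with $\alpha_2 \geq (L_c - Q_1)/Q_{2,\textrm{max}}$ forces $A \in [0, Q_{2,\textrm{max}}]$, so both sub-regions contribute. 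The contribution on $[0, A]$, where $\alpha_1 Q_1 + \alpha_2 Q_2 \leq L_c$ and the MGO expects to sell all stored energy in emergency, will produce $I_1$; the contribution on $[A, Q_{2,\textrm{max}}]$, where only $D_1$ is purchased, will produce $I_2$.

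First I would dispatch $I_1$. On $[0,A]$ the realised utility equals the constant $U_{I,1} = \rho(1-\alpha_1)Q_1 + \theta\rho_c\alpha_1 Q_1$, so with $f_1 = 1/Q_{2,\textrm{max}}$ the integral collapses to $((L_c - \alpha_1 Q_1)/(\alpha_2 Q_{2,\textrm{max}}))\cdot V(U_{I,1})$. Solving $U_{I,1} = R_1$ for $\alpha_1$ gives the threshold $B = (R_1 - \rho Q_1)/(Q_1(\rho_c\theta - \rho))$, and applying the framing rule (\ref{eq:rule}) produces the two-case formula (\ref{eq:CPI1}) directly.

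For $I_2$, on $[A,Q_{2,\textrm{max}}]$ the utility is $U_1(Q_2) = \rho(1-\alpha_1)Q_1 + \tfrac{1}{2}\theta\rho_c(\alpha_1 Q_1 + L_c - \alpha_2 Q_2)$, which is affine and strictly decreasing in $Q_2$. The natural step is the linear change of variable $u = U_1(Q_2)$, whose Jacobian $dQ_2 = -2/(\theta\rho_c\alpha_2)\,du$ is the origin of the multipliers $M_g$ and $M_l$. The endpoints map to $U_{A,2}$ (which by construction equals $U_{I,1}$, giving continuity at $Q_2 = A$ for free) and to $U_{\textrm{max},2}$. Because $V$ is a piecewise power function with break at $R_1$, three sub-cases arise according to where the interval $[U_{\textrm{max},2}, U_{A,2}]$ sits relative to $R_1$: entirely below (pure loss), entirely above (pure gain), or straddling $R_1$. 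The condition $U_{A,2} \leq R_1$ reduces to $\alpha_1 \leq B$, which is exactly $C_1$; when $\alpha_1 > B$, a direct algebraic simplification reduces $U_{\textrm{max},2}$ being below or above $R_1$ to the inequalities stated in (\ref{eq:condition_2})--(\ref{eq:condition_3}), separating $C_2$ from $C_3$. Under $C_2$ I would solve $U_1(Q_{2,r}) = R_1$ for the crossing point (giving (\ref{eq:Q_r_2})) and split the $u$-integral at $u = R_1$ so that each sub-integral is over a pure loss or pure gain segment. Applying the power-rule antiderivative of $(R_1 - u)^{\beta_1^-}$ on loss pieces and $(u - R_1)^{\beta_1^+}$ on gain pieces and restoring the $u$-limits in terms of $U_{A,2}$, $U_{r,2}$, $U_{\textrm{max},2}$ produces the three branches of (\ref{eq:CPI2}).

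The main obstacle is purely organisational rather than analytical: the integrand carries two overlaid piecewise structures (the outer split at $Q_2 = A$ inherited from (\ref{eq:U1}) and the inner split at $R_1$ inherited from the framing function $V$), and the case boundaries $C_1, C_2, C_3$ must be re-expressed in terms of the decision variable $\alpha_1$ rather than in terms of the auxiliary utilities $U_{A,2}$ and $U_{\textrm{max},2}$, so that (\ref{eq:CPI2}) can be differentiated in a later best-response computation. The calculus itself is a single application of the power rule on each sub-interval after the linear change of variable, and continuity at the interface $u = R_1$ in $C_2$ is automatic because the two power branches of $V$ both vanish there.
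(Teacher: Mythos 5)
Your proposal is correct and follows essentially the same route as the paper's Appendix B: split the expectation at $Q_2=A$, observe that the integrand on $[0,A]$ is the constant $U_{I,1}$ so that $I_1$ reduces to the two framing cases at the threshold $B$, and evaluate $I_2$ by locating the reference point $R_1$ relative to the monotonically decreasing utility on $[A,Q_{2,\textrm{max}}]$ (equivalently, locating $Q_{2,r}$ relative to $A$ and $Q_{2,\textrm{max}}$), yielding the pure-loss, mixed, and pure-gain branches. Your phrasing via the linear change of variable $u=U_1(Q_2)$ and the observation $U_{A,2}=U_{I,1}$ is only a presentational variant of the paper's direct integration, not a different argument.
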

\begin{proof}
The proof is given in Appendix C.
\end{proof}

%
Given the complex structure of each MGO's expected utility function with framing, computing the closed-form expression of the best response strategy is difficult for the PT case. In particular,  the analysis of $E_{\textrm{PT},1,2b}$ is quite challenging due to the various forms that the function can take under different conditions as seen in (\ref{eq:CPI1}) and (\ref{eq:CPI2}). Therefore, in order to find the BNE under PT, a best response algorithm is proposed.

This iterative algorithm dictates that, in response to its opponent's current strategy, each MGO sequentially chooses its optimal storage strategy by numerically characterizing, from its action space, the action that maximizes its expected utility. In fact, given the closed-form expressions provided in Propositions 3, 4, and 5, an MGO can easily compute its expected utility for each of its strategies. 
%
In this respect, upon convergence, this algorithm is guaranteed to reach an equilibrium \cite{walid}. In fact, at the point of convergence, each MGO is playing the strategy that maximizes its expected PT utility facing its opponent's strategy. Hence, the MGOs will reach a BNE from which none has any incentive to deviate since such deviation would not improve their expected payoff. Indeed, as observed in our simulations in Section V, the algorithm always converged to an equilibrium. 
\vspace{-0.2cm}
%
%

\section{Simulation Results and analysis}

For our simulations,  we consider a smart grid with $N=2$ MGs capable of supplying power to one of the power grid's critical loads which requires a total of $L_c = 200$ kWh to remain operational until regular power supply is restored. We also assume the regular price per unit of energy to be $\rho = \$0.1$ per kWh. In addition, we take $\theta = 0.01$, and $\rho_c = \$11.6$ per kWh  unless stated otherwise.  The exponents $\beta^+$ and $\beta^-$ are taken to be both equal to 0.88 and the loss multiplier $\lambda=2.25$ unless stated otherwise  \cite{CPT}.  We simulate the system for two scenarios: CGT, and PT under utility framing.


\begin{figure}[t!]\label{fig:1}
  \begin{center}
   \vspace{-0.1cm}    \includegraphics[width=1\linewidth,height=.15\textheight]{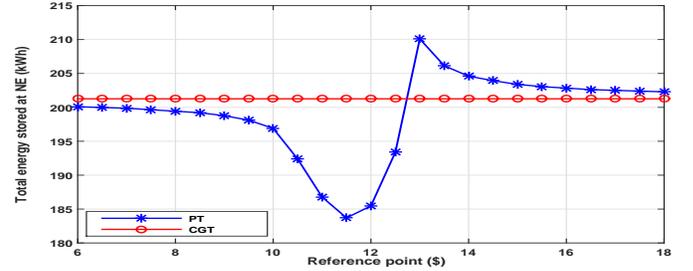}
    \vspace{-0.8cm}
  \caption{\label{fig:1}Total stored energy under classical game theory and prospect theory.}
  \end{center}\vspace{-0.5cm}
\end{figure}

\begin{figure}[t!]\label{fig:2}
  \begin{center}
   \vspace{-0.1cm}    \includegraphics[width=1\linewidth,height=.15\textheight]{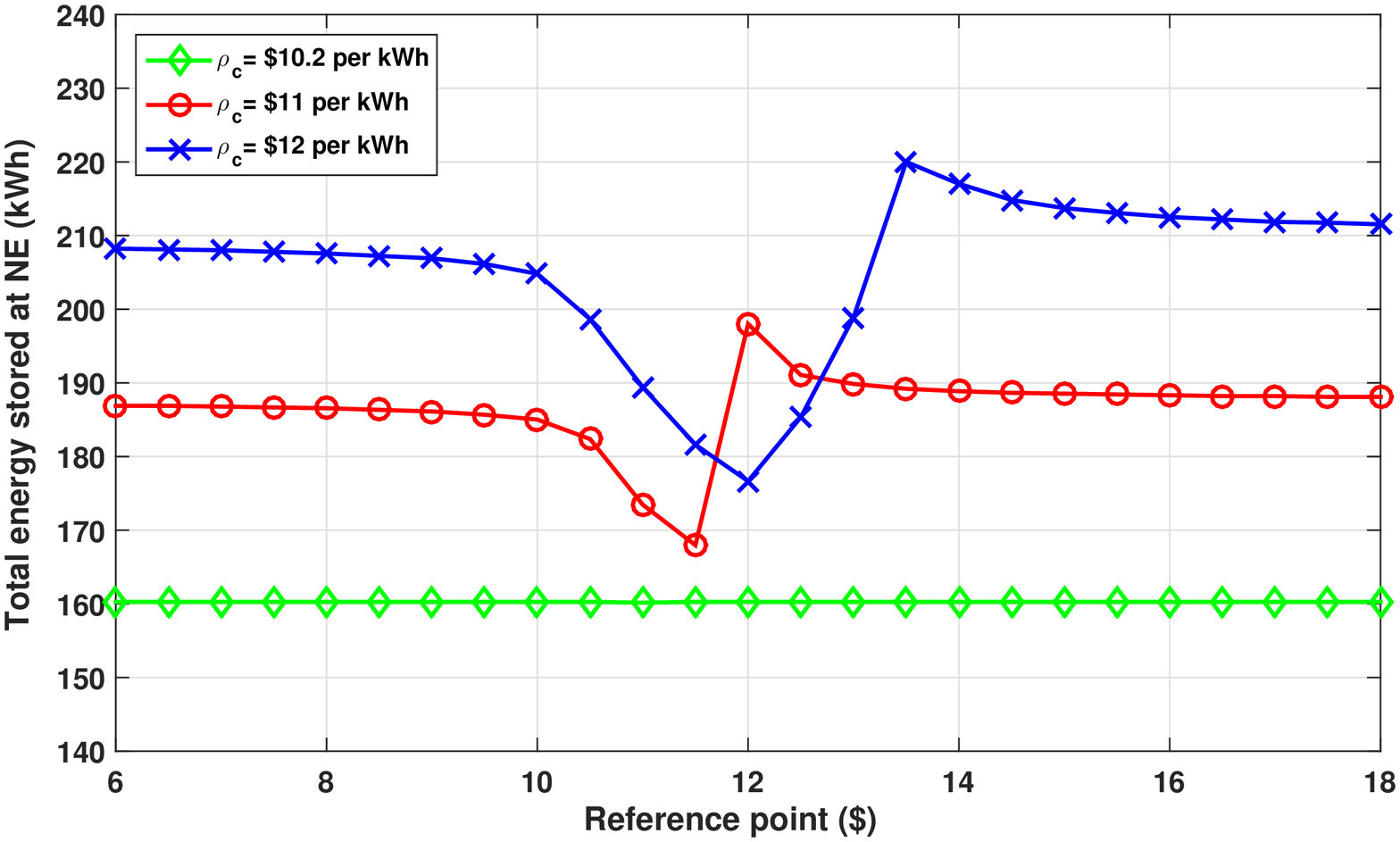}
    \vspace{-0.8cm}
  \caption{\label{fig:2}Effect of emergency price on PT sensitivity to the reference point.}
  \end{center}\vspace{-0.8cm}
\end{figure}

\indent Fig. \ref{fig:1} compares the effects of different MGO  reference points on the total  energy stored for both CGT and PT analysis. In the classical game theory case $ ( \beta^+ = \beta^- = \lambda =1  )$, an MGO's reference point is irrelevant given that losses and gains are computed in an identical objective manner. For the PT case, for a reference point below $\$8$, the BNE action profile is not significantly affected compared to the classical game theory case, since most potential payoffs of the BNE actions are still viewed as gains above the reference point. As the reference point increases from $\$8$ to $\$11.5$, the total 
stored energy will decrease from around $200$ to $184$ kWh, since some of the potential payoffs of the current BNE will start to be perceived as losses, as they cross the reference point. Given that losses have a larger weight under PT compared to classical game theory, the expected utility of the current strategy profile will significantly decrease, thus causing the BNE to drift towards lower storage strategies. The MGOs will exhibit risk averse behavior as they sell more of their energy at the current risk-free retail market price $\rho$. In fact, as previously mentioned, by decreasing $\alpha$, the minimum potential payoffs are larger, compared to the larger values of $\alpha$, and are still above the reference point.

The described behavior is reversed in the $\left[11.5 , 13\right]$ range where the MGOs start exhibiting more risk seeking behavior, i.e., storing more energy, to reach a total stored energy of $210$ kWh. In fact, the low risk strategies' potential payoffs are now fully perceived as losses causing a significant devaluation of their expected utility values. The BNE will thus go towards higher values of $\alpha$ with larger maximum payoffs, compared to lower values of $\alpha$, which are partially still considered as gains. Finally, when the reference point is above $\$13.5$, most potential payoffs of most strategies are now perceived as losses and the effect of PT will diminish gradually, and the total energy stored will reach $202$ kWh, identically to classical game theory. It is important to note that the critical load energy requirements are 200 kWh, which is met with the stored energy of the MGs under classical game theory but not necessarily under PT analysis. This highlights the need for an accurate behavioral analysis of the studied system. 

\indent Fig. \ref{fig:2} shows the effect of changing the emergency price $\rho_c$ on the role of the reference point in an MGO's decision, for $\lambda = 4$. For a price of $\rho_c = \$10.2$ per kWh, the total energy stored does not vary with the reference point. In fact, the expected future profits gained from storing energy are close to the profits incurred by selling at the current market price. On the other hand, when the price is increased to $\rho_c = \$11$ per kWh, the total stored energy will vary with the reference point by up to $10 \%$ from its original value. In fact, storing energy will now yield significantly higher expected future profits,  compared to selling at the current market price. Thus, an MGO's risk-seeking or risk-averse behavior is justified given the increasing uncertainty in profits.  Similarly, when $\rho_c=\$12$ per kWh, the total stored energy would vary further with the changing reference point,  by up to  $17\%$ from its original value.

\begin{figure}[t!]\label{fig:3}
  \begin{center}
   \vspace{-0.1cm}    \includegraphics[width=1\linewidth,height=.15\textheight]{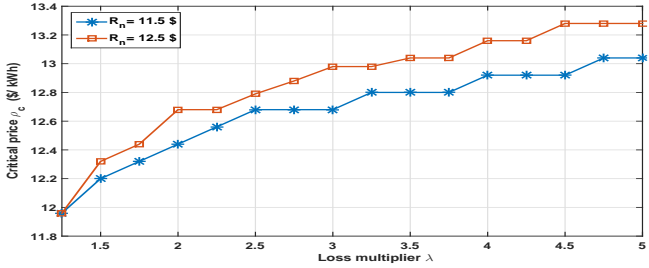}
    \vspace{-0.8cm}
  \caption{\label{fig:3}Emergency price needed to cover $L_c$ as a function of $\lambda$.}
  \end{center}\vspace{-0.5cm}
\end{figure}

\indent Fig. \ref{fig:3} shows the effect of the loss multiplier $\lambda$ on the  emergency price $\rho_c$ needed to cover the critical load for the reference points of $\$11.5$ and $\$12.5$. The effect of framing is more prominent as the loss multiplier increases. In fact, the MGOs will exhibit more risk averse behavior for the specified reference points as $\lambda$ increases, thus prompting the power company to increase the critical price in order to cover the critical load. In fact, as $\lambda$ increases, so will the valuation of the MGOs' losses. To avoid the large losses, the MGOs will decrease the energy stored by their MGs and will tend to sell more energy at the current risk free market price. This highlights the importance of behavioral analysis in choosing the proper pricing mechanism in smart grid resilience planning.

\begin{figure}[t!]\label{fig:4}
  \begin{center}
   \vspace{-0.1cm}    \includegraphics[width=1\linewidth,height=.15\textheight]{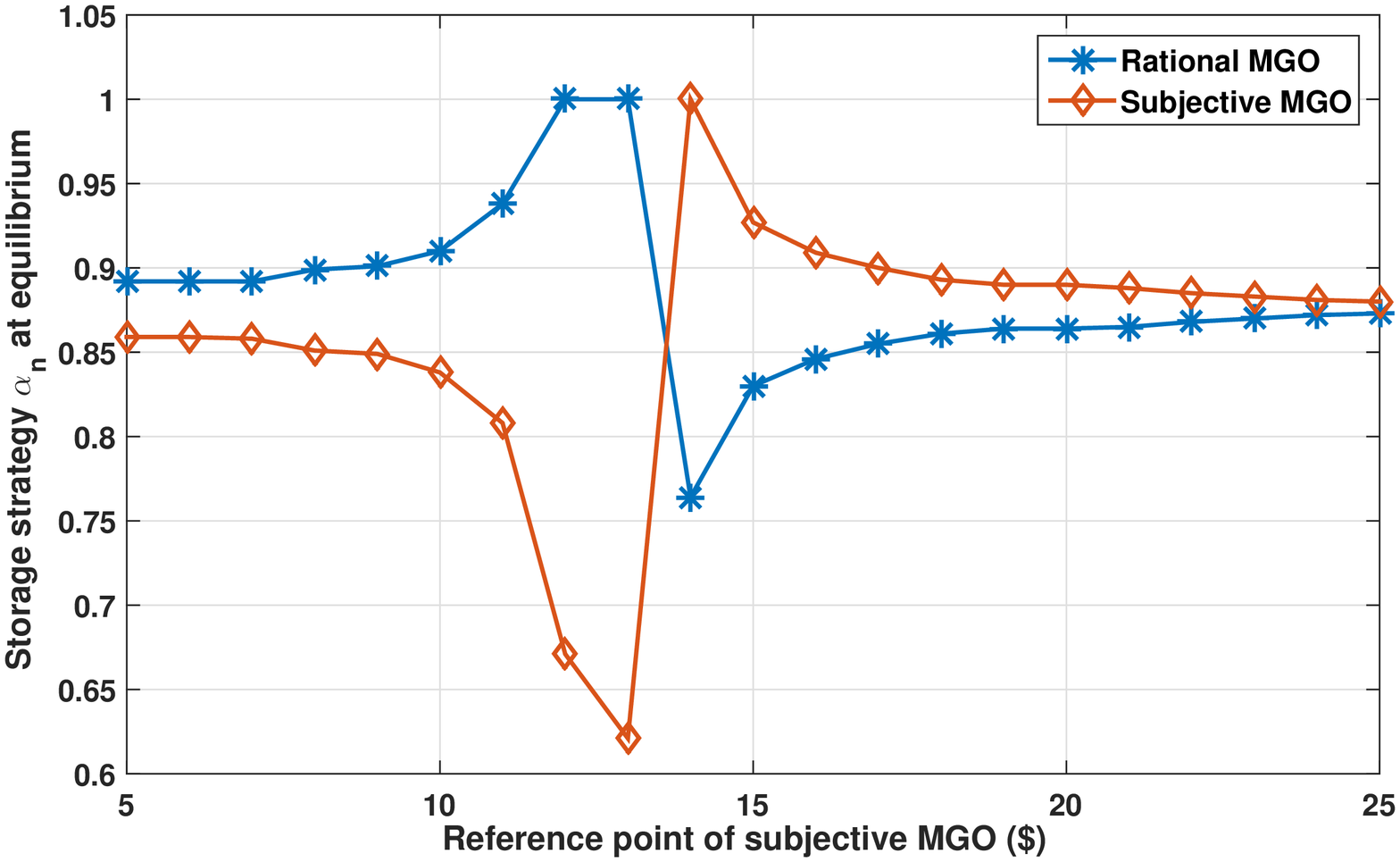}
    \vspace{-0.8cm}
  \caption{\label{fig:4}Storage strategies at equilibrium for a case with one subjective (PT) MGO and one rational (CGT) MGO.}
  \end{center}\vspace{-0.8cm}
\end{figure}
\vspace{-0.1cm}

\indent Fig. \ref{fig:4} illustrates the storage strategies at equilibrium for the case in which one of the MGOs is fully rational, while the second is subjective. The rational MGO will naturally have no reference point. Here, both MGs have the same size of storage $Q_{\textrm{max}}=150 \, \textrm{kWh}$ and energy excess available $Q = 120\, \textrm{kWh}$. As seen in  Fig. \ref{fig:4}, as the reference point of the subjective MGO increases from $\$5$ to $\$13$, it will exhibit risk averse behavior and decrease the portion of energy it stores, to reach a value of $0.625$. This is similar to the analysis of Fig. \ref{fig:1}. To respond, the rational MGO will hence increase the portion of energy stored to reach its maximum of $1$, given the lower stored energy of its opponent. As the reference point increases from $\$13$ to $\$14.5$, the subjective MGO will exhibit more risk seeking behavior and increase the portion of energy stored to reach its maximum of $1$. The rational MGO, will thus decrease its MG's stored energy,  given the storage strategy of its opponent. Finally, as the reference point increases from $\$14.5$ to $\$25$, the effect of utility framing will gradually decrease, and the storage strategy of both MGOs will reach a value of $0.88$. Given the negligible effect of PT at the high reference point of $\$25$, both MGOs, rational and subjective, will have equal strategies at equilibrium and thus similar behavioral patterns.
\vspace{-0.2cm}

\section{Conclusion}
\vspace{-0.1cm}
In this paper, we have proposed a novel framework for analyzing the storage strategy of micorgrid operators in an attempt to enhance smart grid resilience. We have formulated the problem as a Bayesian game  between multiple MGOs, who must choose the portion of their microgrids' excess to store, in order to maximize their expected profits. The MGOs play a noncooperative game, which is shown to have four Bayesian Nash equilibria  for the two MG case, under different conditions. Subsequently, we have used the novel concept of utility framing from prospect theory to model the behavior of MGOs when faced with the uncertainty of their opponents' energy surplus. Simulation results have highlighted the impact of behavioral considerations on the overall process of enhancing the resilience of  a smart grid by exploiting distributed,  microgrid energy storage.
\vspace{-0.1cm}

\def\baselinestretch{0.8}
\bibliographystyle{IEEEtran}
\bibliography{references}

\begin{thebibliography}{10}
\providecommand{\url}[1]{#1}
\csname url@samestyle\endcsname
\providecommand{\newblock}{\relax}
\providecommand{\bibinfo}[2]{#2}
\providecommand{\BIBentrySTDinterwordspacing}{\spaceskip=0pt\relax}
\providecommand{\BIBentryALTinterwordstretchfactor}{4}
\providecommand{\BIBentryALTinterwordspacing}{\spaceskip=\fontdimen2\font plus
\BIBentryALTinterwordstretchfactor\fontdimen3\font minus
  \fontdimen4\font\relax}
\providecommand{\BIBforeignlanguage}[2]{{%
\expandafter\ifx\csname l@#1\endcsname\relax
\typeout{** WARNING: IEEEtran.bst: No hyphenation pattern has been}%
\typeout{** loaded for the language `#1'. Using the pattern for}%
\typeout{** the default language instead.}%
\else
\language=\csname l@#1\endcsname
\fi
#2}}
\providecommand{\BIBdecl}{\relax}
\BIBdecl

\bibitem{path}
H.~Farhangi, ``The path of the smart grid,'' \emph{IEEE Power and Energy
  Magazine}, vol.~8, no.~1, pp. 18--28, January-February 2010.

\bibitem{scutari}
I.~Atzeni, L.~G. Ord{\'o}{\~n}ez, G.~Scutari, D.~P. Palomar, and J.~R.
  Fonollosa, ``Demand-side management via distributed energy generation and
  storage optimization,'' \emph{IEEE Transactions on Smart Grid}, vol.~4,
  no.~2, pp. 866--876, June 2013.

\bibitem{scutari2}
I.~Atzeni, L.~G. Ordóñez, G.~Scutari, D.~P. Palomar, and J.~R. Fonollosa,
  ``Noncooperative and cooperative optimization of distributed energy
  generation and storage in the demand-side of the smart grid,'' \emph{IEEE
  Transactions on Signal Processing}, vol.~61, no.~10, pp. 2454--2472, February
  2013.

\bibitem{hamed}
C.~Wu, H.~Mohsenian-Rad, and J.~Huang, ``Wind power integration via
  aggregator-consumer coordination: A game theoretic approach,'' in \emph{Proc.
  IEEE PES Innovative Smart Grid Technologies (ISGT)}, Washington, DC, USA,
  January 2012, pp. 1--6.

\bibitem{intro1}
R.~Arghandeh, M.~Brown, A.~Del~Rosso, G.~Ghatikar, E.~Stewart, A.~Vojdani, and
  A.~von Meier, ``The local team: leveraging distributed resources to improve
  resilience,'' \emph{IEEE Power and Energy Magazine}, vol.~12, no.~5, pp.
  76--83, September-October 2014.

\bibitem{intro2}
M.~McGranaghan, M.~Olearczyk, and C.~Gellings, ``Enhancing distribution
  resiliency-opportunities for applying innovative technologies,'' \emph{white
  paper - Electric Power Research Institute}, January 2013.

\bibitem{intro3}
{U.S. Department of Energy}, ``The potential benefits of distributed generation
  and rate-related issues that may impede their expansion: A study pursuant to
  section 1817 of the energy policy act of 2005,'' \emph{USDOE, (Ed.)},
  February 2007.

\bibitem{power1}
G.~Venkataramanan and M.~Illindala, ``Microgrids and sensitive loads,'' in
  \emph{Proc. IEEE Power Engineering Society Winter Meeting}, vol.~1, New York,
  NY, {USA}, January 2002, pp. 315--322.

\bibitem{power2}
E.~Aeloiza, P.~Enjeti, L.~Moran, and I.~Pitel, ``Next generation distribution
  transformer: to address power quality for critical loads,'' in \emph{Proc.
  IEEE 34th Annual Power Electronics Specialist Conference}, vol.~3, Acapulco,
  Mexico, June 2003, pp. 1266--1271.

\bibitem{walid}
Z.~Han, D.~Niyato, W.~Saad, T.~Ba{\c{s}}ar, and A.~Hj{\o}rungnes, \emph{Game
  Theory in Wireless and Communication Networks: Theory, Models, and
  Applications}.\hskip 1em plus 0.5em minus 0.4em\relax Cambridge University
  Press, 2012.

\bibitem{nobel}
D.~Kahneman and A.~Tversky, ``Prospect theory: An analysis of decision under
  risk,'' \emph{Econometrica}, vol.~47, no.~2, pp. 263--292, March 1979.

\bibitem{CPT}
A.~Tversky and D.~Kahneman, ``Advances in prospect theory: Cumulative
  representation of uncertainty,'' \emph{Journal of Risk and uncertainty},
  vol.~5, no.~4, pp. 297--323, October 1992.

\bibitem{proc}
W.~Saad, A.~L. Glass, N.~B. Mandayam, and H.~V. Poor, ``Toward a
  consumer-centric grid: A behavioral perspective,'' \emph{Proceedings of the
  IEEE}, vol. 104, no.~4, pp. 865--882, March 2016.

\end{thebibliography}

\begin{appendices}
\section{Proof of Proposition 2}
\linespread{0.9}

For the proof of Proposition $2$, first, we analyze the expected utility of MGO $1$, for $\alpha_1 \in \left[0 , \frac {L_c -  \alpha_2 Q_{2,\textrm{max}} } {Q_1}\right]$ and $ \alpha_1 \in \left[\frac {L_c -  \alpha_2 Q_{2,\textrm{max}}}{Q_1},1\right]$, with $\alpha_2 \in \left[\frac {L_c -  Q_1} {Q_{2,\textrm{max}}}, 1 \right]$.
\\
\\
\indent a) For $\alpha_1 \in \left[0 , \frac {L_c -  \alpha_2 Q_{2,\textrm{max}} }  {Q_1}\right]$, the total energy stored is below the critical load $L_c$ for all possible types of MGO $2$. Here, MGO $1$'s expected utility is given by
\vspace{-0.2cm}

 \begin{align*}\label{eq:3aa}
   E_{1,2a}(\boldsymbol{\alpha},Q_1) = \rho\left( Q_1 - \alpha_1 Q_1 \right) + \theta \rho_{c} \alpha_1 Q_1.
 \end{align*}
$E_{1,2a}$ is a strictly increasing function given that   $\theta \rho_{c} > \rho$, hence, it is maximized at its upper boundary
 $\alpha_{1,2a}^{*} = \frac {L_c -  \alpha_2 Q_{2,\textrm{max}} } {Q_1} $.\\
\vspace{-0.2cm}

 b) For $\alpha_1 \in \left[\frac {L_c -  \alpha_2 Q_{2,\textrm{max}} }  {Q_1} , 1\right] $, given MGO $2$'s  strategy,  the total energy stored is above the critical load for certain types of MGO $2$.  MGO $1$'s expected utility is given by
\vspace{-0.7cm}

\begin{multline}
\footnotesize
 E_{1,2b}(\boldsymbol{\alpha},Q_1) = \int_{0}^{A}   U_1(\boldsymbol{\alpha},\boldsymbol{Q}) f(Q_2) dQ_2 \\
\footnotesize +  \int_{A}^{Q_{2,\textrm{max}}}   U_1(\boldsymbol{\alpha},\boldsymbol{Q}) f(Q_2) dQ_2,
 \end{multline}
with $A =  \frac {L_c - \alpha_1 Q_1}  {\alpha_2 }$ which follows from (5).  Under this assumption,  $f_1(Q_2) = 1/Q_{2,\textrm{max}}$ over its domain and $E_{1,2b}$ is now given by

\vspace{-0.3cm}
\scriptsize
 \begin{multline} \label{eq:20} 
\scriptsize
 E_{1, 2b}(\boldsymbol{\alpha},Q_1) =  \frac {1}  {Q_{2,\textrm{max}}}  \int_{0}^{A}  \left[ \rho\left( Q_1 - \alpha_1 Q_1 \right) + \theta \rho_{c} \alpha_1 Q_1 \right] dQ_2 +\\
\scriptsize  \dfrac {1}  {Q_{2,\textrm{max}}}\int_{A}^{Q_{2,\textrm{max}}}  \left[ \rho \left( Q_1 - \alpha_1 Q_1\right) + \frac{1}{2} \theta \rho_{c} \left(\alpha_1 Q_1 -\alpha_2 Q_2 + L_c \right) \right]  dQ_2.
\end{multline}
\normalsize 

\vspace{-0.3cm}
By taking the second derivative of (\ref{eq:20}) with respect to the decision variable $\alpha_1$, we get

\vspace{-0.35cm}
\begin{align*} 
\frac {\partial E_{1,2b}} {\partial^2\alpha_1} = -  \frac{Q_1^2 \rho_c\theta} {2 \alpha_2Q_{\textrm{max},2} }.
\end{align*}
\vspace{-0.25cm}

\noindent The function is strictly concave given that its second derivative is strictly negative. The optimal solution is, hence, obtained by the necessary and sufficient optimality condition given by
\vspace{-0.2cm}

\begin{align} \label{eq:der} 
 \frac {\partial E_{1,2b}} {\partial\alpha_1} = 0.
\end{align}\vspace{0.02cm}

\noindent  (\ref{eq:der}) has a unique solution which is given by

 \begin{align}
\nonumber  \alpha_{1,r} =   \frac {L_c \rho_c \theta + ( \rho_c \theta - 2\rho)\alpha_2 Q_{2,\textrm{max}}  } { Q_1 \rho_c \theta}.
 \end{align}

\vspace{0.2cm}
Given that  $E_{1,2b}$ is a strictly concave function and that $\alpha_1$ is restricted to { $\left[\frac {L_c -  \alpha_2 Q_{2,\textrm{max}} }  {Q_1} \,, 1\right]$},
$\alpha_{1,2b}^{*}$ will be\\

\vspace{-0.25cm}
\begin{align}\label{eq:13b} \footnotesize 
    \alpha_{1,2b}^{*} =
    \begin{cases}
     \frac {L_c -  \alpha_2 Q_{2,\textrm{max}} }  {Q_1},  & \text{if}\   \alpha_{1,r} < \frac{L_c -  \alpha_2 Q_{2,\textrm{max}}}{Q_1}, \\
      \alpha_{1,r}, & \text{if}\  \alpha_{1,r} \in \left[\frac {L_c -  \alpha_2 Q_{2,\textrm{max}} }  {Q_1} \,\,\,\, 1\right], \\
      1, & \text{if}\  \alpha_{1,r} > 1. 
    \end{cases}
\end{align}

\vspace{0.15cm} 

 In fact, $\alpha_{1,r}$ is the optimal solution for $E_{1,2b}$ if it belongs to the feasible region of $E_{1,2b}$.  On the other hand, if $\alpha_{1,r}$ is larger than the upper bound, then  $E_{1,2b}$ is a strictly increasing function over the feasibility set and is maximized at its upper bound  $\alpha_{1,2b}^{*} = 1$. Finally, if $\alpha_{1,r}$ is smaller than the domain's lower bound $\frac {L_c -  \alpha_2 Q_{2,\textrm{max}}}{Q_1}$, then  $E_{1,2b}$ is a strictly decreasing function over the feasibility set and is maximized at its lower bound. However, the condition $\alpha_{1,r} < \frac{L_c -  \alpha_2 Q_{2,\textrm{max}}}{Q_1}$ cannot be satisfied for $\rho_c \theta > \rho$, and thus $\frac {L_c -  \alpha_2 Q_{2,\textrm{max}} }  {Q_1}$ cannot be the maximizer of $E_{1,2b}$. We can thus rewrite (\ref{eq:13b}) as

\begin{align}\label{eq:brbrbrbrbr} 
 \alpha_{1,2b}^{*} =
   \begin{cases}
    \alpha_{1,r}, & \text{if}\   \, \left[\frac {2\rho}{\rho_c\theta}-1\right] \alpha_2 > \frac{L_c -  Q_1} {Q_{2,\textrm{max}}},  \\
    1, & \text{if}  \, \left[\frac {2\rho}{\rho_c\theta}-1\right] \alpha_2 \leq \frac{L_c -  Q_1} {Q_{2,\textrm{max}}}.\\
    \end{cases}
\end{align}
\vspace{+0.1cm} 


We first note that $E_{1, 2a} = E_{1,2b}$ for  $\alpha_{1} = \frac {L_c -  \alpha_2 Q_{2,\textrm{max}}}{Q_1}$ which is the maximizer of $E_{1, 2a}$. However, as previously discussed,
$E_{1, 2b}$ cannot be maximized at $ \frac {L_c -  \alpha_2 Q_{2,\textrm{max}}}{Q_1}$. Thus,  the maximizer of MGO $1$'s expected utility, for $\alpha_2 \in \left[\frac {L_c -  Q_1} {Q_{2,\textrm{max}}}, 1 \right]$, belongs to the domain $\left[\frac {L_c -  \alpha_2 Q_{2,\textrm{max}}}{Q_1},1\right].$ In other words,
$r_{1}(\alpha_2) = \alpha_{1,2b}^{*}$  for $\alpha_2 \in \left[\frac {L_c -  Q_1} {Q_{2,\textrm{max}}}, 1 \right]$.

\section{Proof of Proposition 5}
\linespread{0.95}
 Player 1's expected utility under PT, for $\alpha_2 \in \left[\frac {L_c -  Q_1} {Q_{2,\textrm{max}}}, 1 \right]$, and
 $\alpha_1 \in \left[\frac {L_c -  \alpha_2 Q_{2,\textrm{max}} }  {Q_1} , 1\right] $, is given by
\vspace{-0.2cm}
\scriptsize
 \begin{multline}\label{eq:3b3b}
 E_{\textrm{PT},1,2b}(\boldsymbol{\alpha},Q_1) =   \int_{0}^{A} \dfrac {1}  {Q_{2,\textrm{max}}} V \left( \rho\left( Q_1 - \alpha_1 Q_1 \right) + \theta \rho_{c} \alpha_1 Q_1 \right) dQ_2 +\\
 \int_{A}^{Q_{2,\textrm{max}}}   \dfrac {1}  {Q_{2,\textrm{max}}}V\left( \rho Q_1 \left(1- \alpha_1\right) + \dfrac {1}{2} \theta \rho_{c} \left(\alpha_1 Q_1 -\alpha_2 Q_2 + L_c \right) \right)  dQ_2.
\end{multline}
\normalsize 

 We denote by $I_1$  the first integral in  ($\ref{eq:3b3b}$), and by $I_2$ the second. As previously mentioned, PT states that a utility is perceived in terms of gains and losses with respect to the reference point. Next, we analyze the possible values of both integrals $I_1$ (first integral) and $I_2$ (second integral)  in (\ref{eq:3b3b}) from that perspective.
The original utility in $I_1$, $U_{I,1}=\rho\left( Q_1 - \alpha_1 Q_1 \right) + \theta \rho_{c} \alpha_1 Q_1$, is only a function of $\alpha_1$ and is independent of  $Q_2$.  Equation (\ref{eq:CPI1}) follows from the fact that for {$\alpha_1 \leq B$},  $U_{I,1}$   is below the reference point $R_1$ and is thus perceived as a loss. On the other hand, it is considered as a gain for $\alpha_1 > B$.

We then assess the possible values of $I_2$. The original utility function in $I_2$, $U_{I,2} = \rho\left( Q_1 - \alpha_1 Q_1\right) + \frac{1}{2} \theta \rho_{c} \left(\alpha_1 Q_1 -\alpha_2 Q_2 + L_c \right) $ is considered a loss given that
\vspace{-0.2cm}

 \begin{equation}\label{eq:Q_r} \nonumber
\rho\left( Q_1 - \alpha_1 Q_1\right) + \dfrac {1}{2} \theta \rho_{c} \left(\alpha_1 Q_1 -\alpha_2 Q_2 + L_c \right) < R_1,
 \end{equation}
which can be rewritten as $Q_{2,r} < Q_2$ with $Q_{2,r}$ given by

\begin{equation}\label{eq:Q_r_2} \small
Q_{2,r} =   \dfrac{2}{\rho_c \theta \alpha_2} \left[ \rho\left( Q_1 - \alpha_1 Q_1\right) + \dfrac {1}{2} \theta \rho_{c} \left(\alpha_1 Q_1  + L_c \right) - R_1 \right]. 
\end{equation}
\normalsize

Given that MGO $1$'s expected utility is taken over MGO $2$'s type ($Q_2$), we next analyze $I_2$ for different values of $Q_2$. (\ref{eq:CPI2}) follows from the fact that $I_2$ is a loss integral for $Q_{2,r} < A$. Given that the lower bound of  $I_2$ is larger than $A$, then the entire range of $Q_2$ values is as well. The condition $Q_{2,r} < A$ can be rewritten as $C_1$. On the other hand, $I_2$ is a gain integral for $Q_{2,r} > Q_{2,\textrm{max}}$ which can be rewritten as $C_2$. Finally, for  $A < Q_{r,2} < Q_{2,\textrm{max}}$,   $I_2$ is split into two parts: a gain integral on  $\left[ A , Q_{2,r}  \right]$ and a loss integral on $\left[ Q_{\textrm{ref}2},  Q_{2,\textrm{max}}  \right]$.  $A < Q_{2,r} < Q_{2,\textrm{max}}$ can be rewritten as $C_3$. (\ref{eq:CPI1}) and (\ref{eq:CPI2}) are obtained by evaluating the integrals $I_1$ and $I_2$ for the described cases.
\end{appendices}


%


\end{document}